\documentclass[runningheads]{llncs}
\usepackage[utf8]{inputenc}
\usepackage[T1]{fontenc}
\usepackage {stmaryrd,amsfonts,amsmath,amssymb,xcolor,bm,dsfont,mathtools,float,tikz,tabularx,thm-restate,ntheorem,cite}
\usepackage[inline]{enumitem}
\usepackage[all]{xy}
\usepackage{xspace}
\usepackage[hidelinks = true,colorlinks = true,citecolor = blue, urlcolor = blue,linkcolor = blue]{hyperref}
\usepackage{cleveref}
\usetikzlibrary[arrows,decorations.pathmorphing]
\usepackage[misc,geometry]{ifsym}

\input symbexfree.sty \IfFileExists{orcid.pdf}{
 \def\orcidsymbol{\includegraphics[height=9pt]{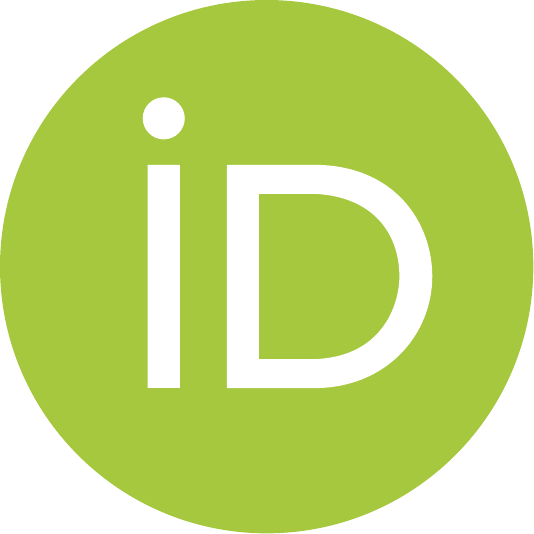}}
}{
 \def\orcidsymbol{\textcolor{lipicsGray}{\fontsize{9}{12}\sffamily\bfseries \faOrcid}}
}
\def\orcidID#1{\hspace{2pt}\smash{\href{http://orcid.org/#1}{\protect\raisebox{0pt}{\protect\orcidsymbol}}}}

\title{Correct and Complete\\ Symbolic Execution for Free\thanks{This research is partially supported by the NWO grant No.~OCENW.M20.053, ERC grant Autoprobe (no. 101002697) and a Royal Society Wolfson fellowship.}}
\author{{Erik Voogd~\Letter\inst{1}\orcidID{0009-0007-9712-3224}}
  \and {Einar Broch Johnsen\inst{1}\orcidID{0000-0001-5382-3949}}
  \and {\\Åsmund Aqissiaq Arild Kløvstad\inst{1}\orcidID{0009-0007-1957-4409}}
  \and {Jurriaan Rot\inst{2}\orcidID{0000-0002-1404-6232}}
  \and {Alexandra Silva\inst{3}\orcidID{0000-0001-5014-9784}}}
\institute{University of Oslo, Oslo, Norway 
\and Radboud University, Nijmegen, the Netherlands 
\and Cornell University, Ithaca NY, USA\\
Corresponding author: \Letter~\email{erikvoogd@live.nl}}
\authorrunning{E.~Voogd et al.}

\begin{document}
\maketitle \begin{abstract}
 Symbolic execution is a powerful technique for program analysis.
 However, the formal semantics underlying symbolic execution is often developed on an ad-hoc basis and decoupled from the concrete semantics of the programming language.
 To overcome this issue, we introduce \emph{symbolic SOS}: a rule format that allows us to simultaneously specify concrete \emph{and} symbolic operational semantics.
 We prove that symbolic semantics, when generated from symbolic SOS, is both correct and complete with respect to the corresponding concrete semantics.
 The approach relies only on an algebraic signature of the source language, and is thus language-independent.
\end{abstract}

\section{Introduction} \label{sec:intro} 
Symbolic execution is an established program analysis technique with a long history~\cite{king1976symbolic,boyer1975select,katz1975towards}, that is widely used for bug finding, verification and even program synthesis~\cite{baldoniSurveySymbolicExecution2018,fragososantosGillianPartMultilanguage2020,fragososantosGillianPartMultilanguage2021,ahrendt16key,porncharoenwase2022withmerging}.
More recently, the systematic study of the formal foundations of symbolic execution in its own right has gained increasing interest~\cite{arusoaie2013generic,lucanu2017generic,deboerSymbolicExecutionFormally2021,steinhofel2020abstract}.
In particular, De Boer and Bonsangue~\cite{deboerOnTheNature2019,deboerSymbolicExecutionFormally2021}
introduce the notions of \emph{correctness} and \emph{completeness}, formalizing a correspondence between symbolic execution and concrete program execution.

Proving that a given symbolic semantics is correct and complete, however, is a tedious task that would benefit from automation.
This paper provides the basis for such automation by answering the following question: \emph{what are sufficient conditions for a language specification to define a symbolic semantics that is correct and complete by design?}
We take language specifications to be defined in Structural Operational Semantics (SOS)~\cite{Plotkin04a,plotkin1997mathematical}, a standard formalism to specify programming language semantics as collections of inference rules.
Our goal is to extract, for a given language with an operational semantics expressed as an SOS specification, a correct and complete symbolic semantics.

Our starting point is the \emph{GSOS} rule format~\cite{bloom1995bisim}.
The syntactic restrictions on rules in GSOS specifications ensure that the resulting semantics is well-defined and compositional with respect to bisimilarity.
The GSOS
format was later generalized by Turi and Plotkin into \emph{abstract GSOS}~\cite{plotkin1997mathematical}, which formalizes specifications through interaction between algebra (representing syntax) and coalgebra (representing transition systems).
Goncharov et al. have recently introduced \emph{stateful structural operational semantics} (SSOS)~\cite{goncharov2022stateful}, which adapts abstract GSOS to a stateful setting.

In order to automatically obtain symbolic semantics from SOS specifications, we refine SSOS to \emph{symbolic SOS}.
A symbolic SOS specification defines both a concrete and a symbolic operational semantics, in terms of two (small-step) transition systems.
Our main result is that this symbolic semantics is always correct and complete with respect to the concrete semantics.
Any programming language defined in the symbolic SOS format thereby comes equipped with a corresponding correct and complete symbolic semantics.
To prove this general correspondence result, we introduce the notion of \emph{syncrete bisimulation}, which gives a sufficient condition for correctness and completeness.
We then show that for every symbolic SOS specification, the induced concrete and symbolic semantics are bisimilar, and therefore the symbolic semantics is correct and complete.

\paragraph{Contributions}
In summary, this paper introduces:
\begin{enumerate*}[label=(\roman*)]
 \item the \emph{symbolic SOS} rule format (\Cref{sec:ruleformat}) from which both a concrete and a symbolic semantics can be derived;
 \item the notion of \emph{syncrete bisimulation}, which provides a coinductive proof technique for correctness and completeness (\Cref{sec:symbolicsemantics}); and \item a justification for the rule format (\Cref{thm:syncretebisim})
 stating that the two derived semantics from a symbolic SOS
 specification are always syncretely bisimilar.
\end{enumerate*}

\section{Overview}\label{sec:overview}
This section summarizes the technical development and presents our key results.
We also introduce our running example: the imperative toy language \While.
\While serves as a minimal concrete example to illustrate our results, but the approach is language-independent.
In \Cref{sec:extensions}, we consider additional common programming constructs, to showcase the power of our approach.

\begin{example}[The Syntax of\, \While]\label{ex:grammar}
 The syntax of \While is given by three grammars for expressions, Boolean expressions and program statements.
 Let us consider the following grammar for expressions and Boolean expressions:
 $$\exp \, ::= \, \x \,\mid\, n \,\mid\, \exp + \exp \,\mid\, \exp - \exp \,\mid\, \exp * \exp \qquad \bexp \, ::= \, \true \,\mid\, \neg\bexp \,\mid\, \bexp\land\bexp \,\mid\, \exp<\exp \,\mid\, \exp=\exp $$
 where $\x\in\Var$ ranges over program variables and $n\in\Z$ over integers.
 The syntax of the \While programming language is given by the grammar $$ \p \, ::= \, \pskip \, \mid \, \assign \, \mid \, \seq\p\p \, \mid \, \pif\bexp\p\p \, \mid \, \while\bexp\p$$
 These program statements represent inaction, assignments, sequencing, branching, and unbounded iteration.
\end{example}

The semantics of \While can be specified using an SOS format to define a transition system that formalizes the evolution of program configurations.
In particular, the SSOS
format~\cite{goncharov2022stateful} considers pairs $(p,\state)$, where $p$ is the program to be executed and $\state$ is a state.
Usually, states associate values to program variables.
Consider, for example, the following rules for sequencing and assignment:
$$
\srule{}{(\p,\state)\contrterm\state'}{(\seq\p\q,\state)\contr(\q,\state')}
\qquad \begin{array}{c}\quad\\
 \saxiom{}{(\assign,\state)\contrterm \state[\x \mapsto \state(\exp)]}
\end{array}
$$
The transition relation $\contr$ denotes one step of \emph{concrete execution}, evolving to a new program and state, and $\contrterm$ denotes termination.
The subscript~$c$ here emphasizes that system states are \emph{concrete}.
With $[\x\mapsto\state(\exp)]$, an assignment updates the value of variable $\x$ to $\state(\exp)$, which is the expression $\exp$ evaluated in the state $\state$ according to a standard interpretation of arithmetic operations.
Each rule actually represents a (potentially infinite) family of rules, one for each state.
Together, these two (families of) rules describe a concrete execution model consisting of sequences of assignments to program variables.

For symbolic execution we may also define a \emph{symbolic} semantics, again using the SSOS format.
This is defined as a transition relation between states~$\sstate$ which are substitutions, i.e., they associate expressions to variables.
The symbolic state moreover contains a set of path constraints which we add later.
Symbolic rules are often identical to their concrete counterpart, up to the interpretation of states.
Indeed, consider the analogous symbolic rules for sequencing and assignments:
$$\srule{}{(\p,\sstate)\symbextrterm\sstate'}{(\seq\p\q,\sstate)\symbextr(\q,\sstate')}
\qquad \saxiom{}{(\assign,\sstate)\symbextrterm \sstate[\x \mapsto \app\sstate\exp]}
$$
Here, the \emph{symbolic} transition relation $\symbextr$ denotes one step of \emph{symbolic execution}, evolving to a new program and symbolic state, and $\symbextrterm$ denotes termination.

The rules look identical to their concrete counterparts, but the state updates differ subtly.
While the concrete rule updates the state to map the variable~$\x$ to a new value, namely the expression~$\exp$ evaluated in state~$\state$, the symbolic rule updates the state to map the variable~$\x$ to a new expression:
$\app\sstate\exp$ is the expression~$\exp$ with all variables substituted according to~$\sstate$.

For symbolic execution to be useful, it must indeed be an abstraction of the concrete execution.
That is, informally, for each concrete step there must be a symbolic step whose final state describes the state update of the concrete step.
For some rules in concrete execution, however, it is unclear what the matching symbolic step should look like.
Consider the concrete rules for branching:
$$\srule{}{\state\sat\bexp}{(\pif\bexp\p\q,\state) \contr (\p,\state)} \quad \srule{}{\state\sat\neg\bexp}{(\pif\bexp\p\q,\state) \contr (\q,\state)}
$$
These rules express that the program $\pif\bexp\p\q$ evolves to $\p$
if $\state$ satisfies $\bexp$, and to $\q$ otherwise; the state $\state$ remains unaltered in either case.
These rules are deterministic: branching is resolved by checking whether or not states satisfy the Boolean expression~$\bexp$, guarding the control-flow of the program.

From a \emph{symbolic} state, however, the question of whether the symbolic state satisfies $\bexp$ cannot be resolved.
We may enable \emph{both} transitions 
$$\srule{}{}{(\pif\bexp\p\q,\sstate) \symbextr (\p,\sstate)} \qquad \srule{}{}{(\pif\bexp\p\q,\sstate) \symbextr (\q,\sstate)}$$
rendering the symbolic transition relation non-deterministic.
Starting from a single state, therefore, a program generates many different sequences of transition steps, called \emph{paths}.
In contrast, concrete execution generates a \emph{single} path.
So which symbolic execution path simulates the concrete execution path?

To answer this question, states in symbolic execution are enhanced with a \emph{path condition} and the transition relation now relates triples of programs, states, and a path condition.
Path conditions aggregate the Boolean conditions that guide a program's control flow under the current substitution.
In the case of conditional branching, this is realized through the following two rules:
$$\srule{}{}{(\pif\bexp\p\q,\sstate,\pc) \symbextr (\p,\sstate,\pc\wedge\app\sstate\bexp)} \quad \srule{}{}{(\pif\bexp\p\q,\sstate,\pc) \symbextr (\q,\sstate,\pc\wedge\app\sstate{\neg\bexp)}}
$$
Both steps augment the path condition by substituting for the variables $\x$ their associated expressions $\app\sstate\x$ in the expressions~$\bexp$ and~$\neg\bexp$.\footnote{We assume that this always results in a Boolean expression; abstracting from program-level type correctness.} 
The resulting system is still technically non-deterministic, but the path condition ``determinizes'' the symbolic execution by specifying exactly which concrete executions it simulates.

To argue that the path condition of a symbolic execution path is indeed a precondition for concrete executions, the two types of executions are connected by proving \emph{correctness} and \emph{completeness}.
We define these notions following De Boer and Bonsangue~\cite{deboerSymbolicExecutionFormally2021}.
Below, the initial configuration ${(\sstate_0,\pc_0)=(\id,\true)}$ consists of the identity substitution $\sstate_0$ on variables and the Boolean \emph{truth} formula~$\true$.

\subsubsection{Correctness.}
Symbolic execution is \emph{correct} with respect to concrete execution if all \emph{symbolic} execution paths 
\begin{equation}\label{eq:symbexpath}
	{(\p_0,\sstate_0,\pc_0) \symbextr (\p_1,\sstate_1,\pc_1) \symbextr \dots \symbextr (\p_k,\sstate_k,\pc_k)}
\end{equation}
simulate the \emph{concrete} execution paths from $(\p_0,\state_0)$ for which $\state_0\sat\pc_k$.
Formally,
\begin{equation}(\p_0,\state_0\after\sstate_0) \contr (\p_1,\state_0\after\sstate_1) \contr \dots \contr (\p_k,\state_0\after\sstate_k) 
	\end{equation}
is the \emph{unique} concrete execution path starting from $\p_0$ and $\state_0\sat\pc_k$.
Here, $\state\after\sstate$ denotes \emph{evaluation of $\state$ after substitution $\sstate$}---this is made formal in \Cref{def:substeval}.

\subsubsection{Completeness.}
Symbolic execution is \emph{complete} with respect to concrete execution if every concrete execution path 
\begin{equation} (\p_0,\state_0) \contr (\p_1,\state_1) \contr \dots \contr (\p_k,\state_k) 
\end{equation}
is simulated by a symbolic execution path as in Equation~\eqref{eq:symbexpath},
whose resulting path conditions are satisfied by~$\state_0$, i.e., $\state_0\sat\pc_j$ for all $j\in[0..k]$.

Correct- and completeness are properties of symbolic execution that are defined with respect to the concrete semantics:
\emph{correctness} is when every symbolic execution path corresponds to a realizable concrete computation, and \emph{completeness} is when all concrete paths are represented by some symbolic path.
In the context of program analysis, on the other hand, correctness is about \emph{coverage} and completeness is about \emph{precision}~\cite{arusoaie2013generic,deboerSymbolicExecutionFormally2021}.

The process of defining concrete and symbolic semantics separately and then proving correctness and completeness is cumbersome.
As observed, the symbolic and concrete rules are (almost) identical, leading to our key question:

\medskip \noindent 
\textbf{Question:} \emph{Can we obtain correct and complete symbolic semantics directly from a language specification?}

\medskip \noindent We answer this question by defining a \emph{symbolic} SOS rule format.
A key insight for our work is the observation that both concrete and symbolic transition systems can arise from the same underlying specification, if this specification is sufficiently structured to obtain both semantics.
In particular, the resulting state needs to be carefully constructed to ensure symbolic simulation, and care must be taken to allow building a path condition.

In symbolic SOS specifications (\Cref{sec:ruleformat}), states are meta-variables, i.e., placeholders for both concrete \emph{and} symbolic states.
In \Cref{sec:concretesemantics} we make explicit how the concrete semantics is derived from a symbolic SOS specification.
Then, in \Cref{sec:symbolicsemantics}, we derive the symbolic semantics and we define the notion of \emph{syncrete bisimulations}.
Crucially, we show that the derived semantics are related by a syncrete bisimulation relation, ensuring both correctness and completeness.

\section{Preliminaries}\label{sec:preliminaries}
\vspace*{-0.2cm}
For our programming language, we take a signature $\signature$ and a fixed set of program variables $\Var$.
The signature consists of ways to generate \rom 1 \emph{programs} using operators in $\Sigma$; \rom 2 \emph{expressions} using $\ExpOp$-operators; \rom 3 \emph{predicates} using $\PredOp$; and \rom 4 \emph{Boolean expressions} using $\LogOp$. 
Every operator has some arity given by $\arity \colon \Sigma+\ExpOp+\PredOp+\LogOp \to \N$.
The meaning of $\guardsign$ is explained below.

For any set $X$, write $\EE X$ for the set of terms over $X$ using operations in $\ExpOp$; we call these \emph{expressions} over $X$. 
Then $\EE\Var$ is the set of \emph{program} expressions, used in, e.g., assignments.
With $\Val$ a set of values (integers, rationals, lists, etc.), a family $\evaluate$ of maps $(\evaluate_\operator \colon \Val^{\arity(\operator)} \to \Val)_{\operator\in\ExpOp}$ interprets $\ExpOp$-operators.
The inductive extension $\lift\evaluate$ of $\evaluate$ over itself, i.e., $\lift\evaluate\colon\EE\Val \to \Val$ with $\lift\evaluate(\val)=\evaluate(\val)$, $\val\in\Val$ and $\lift\evaluate(\operator(\exp_1,\dots,\exp_{\arity(\operator)}) = \evaluate_{\operator}(\lift\evaluate(\exp_1),\dots,\lift\evaluate(\exp_{\arity(\operator)}))$ evaluates expressions over values.
We will write $\evaluate$ for $\lift\evaluate$.

\emph{States} during concrete execution (or \emph{concrete states}) are mappings $\state \colon \Var \to \Val$ that assign values to program variables.
Given $\evaluate\colon\EE\Val\to\Val$, program expressions $\exp \in \EE\Var$ can be evaluated in any state by inductive extension of the map $\state\colon\Var\to\Val$ to $\lift\state\colon\EE\Var \to \Val$ defined by $\lift\state(\x)=\state(\x)$ and $\lift\state(\operator(\exp_1,\dots,\exp_{\arity(\operator)}))=\evaluate_\operator(\lift\state(\exp_1),\dots,\lift\state(\exp_{\arity(\operator)}))$.
Then $\lift\state=\lift\evaluate\circ\EE\state$, where $\EE\state\colon\EE\Var\to\EE\Val$ performs uniform substitution of variables by values in the expression according to $\state$.
Sometimes we write $\state$ for $\lift\state$.

The set $\PP X$ of \emph{predicates} over $X$ consists of expressions $\predicate(\exp_1,\dots,\exp_n)$, where $\predicate \in \PredOp$ is an $n$-ary predicate operator, i.e., $\arity(\predicate)=n$, and each $\exp_i \in \EE X$ is an expression over~$X$.
Common examples include membership, equality, and inequalities.
Assume we can \emph{interpret} predicates over values using ${\interpret \colon \PP\Val \to \{\truth,\falsehood\}}$.
We say a concrete state $\state\in\Val^\Var$ \emph{satisfies} a program predicate ${\predicate(\exp_1,\dots,\exp_n)\in\PP\Var}$, written $\state \sat_{\evaluate,\interpret} \predicate(\exp_1,\dots,\exp_n)$ iff $\interpret(\predicate(\lift\state(\exp_1),\dots,\lift\state(\exp_n)))=\truth$.
We will leave dependence on $\evaluate$ and $\interpret$ implicit by just writing $\state \sat \predicate(\exp_1,\dots,\exp_n)$.

With $\LogOp$ as a set of logical operators such that every $n$-ary $\logop\in\LogOp$ has a truth table $\{\truth,\falsehood\}^n \to \{\truth,\falsehood\}$, $(\ExpOp,\PredOp,\LogOp)$ is a first-order logic signature without quantification.
Let $\BB X$ be the set of \emph{Boolean expressions} over $X$ inductively generated by operators in $\LogOp$ over predicates in $\PP X$.
We assume $\LogOp$ contains binary conjunction $\land$ and disjunction $\lor$, and unary negation $\neg$, each with its conventional truth table. 
The constant $\true\in\LogOp$ (true) is satisfied by all states, and $\false\in\LogOp$ (false) by no states.
Write $\state\sat\bexp$ if a state $\state\in\Val^\Var$ satisfies $\bexp\in\BB\Var$.

The set $\Sigma^*(X)$ of \emph{programs} with free variables $X$ is
inductively generated by operators in~$\Sigma$.  Let
$\Prg=\Sigma^*(\emptyset)$ be the set of \emph{(closed)} programs.
These are the only programs that will be equipped with a semantics.
Operators in $\Sigma$ may use \emph{expressions} in $\EE\Var$ and
\emph{Boolean expressions} in $\BB\Var$.  Crucial to our work,
$\guardsign \colon \Sigma \to \BB\Var$ assigns to every operator
$\operator\in\Sigma$ an associated \emph{guard}
$\guardof\operator \in \BB\Var$ governing control flow of the
semantics.  For the rest of the paper, we consider the signature
$\signature$, the set of variables $\Var$, and the interpretations
$\evaluate$ and $\interpret$ fixed.

\begin{example}[The \While Signature]\label{ex:whilesignature}
 The signature $\signature$ for the \While language from \Cref{ex:grammar} is as follows:
 $\Sigma$ contains the constant $\pskip \in \Sigma$, a binary operator for sequencing, and one constant for every pair $(\x,\exp) \in \Var \times \EE\Var$ of variable and expression representing an assignment $\assign$.
 Each of these operators has $\true$ as guard.
 $\Sigma$ furthermore contains, for each Boolean expression ${\bexp\in\BB\Var}$, a binary operator $\pif{\bexp}{\,\cdot\,}{\,\cdot\,}$ for branching with guard $\guardof{\pif{\bexp}{\,\cdot\,}{\,\cdot\,}}=\bexp$ and a unary operator $\while\bexp{\,\cdot\,}$ for unbounded iteration with guard $\guardof{\while\bexp{\,\cdot\,}}=\bexp$.
 The set $\ExpOp$ contains the binary operators $+$, $-$, $*$, and all integers.
 The set $\PredOp$ contains binary $<$ and $=$, and $\LogOp$ contains constant $\true$, unary $\neg$, and binary $\land$.
 We let $\Val=\Z$ and $\evaluate$ and $\interpret$ are standard; e.g., $\evaluate(6 * 7) = 42$ and $\interpret(6*7>0\lor\false) = \truth$.
\end{example}

\section{Symbolic Rule Format} \label{sec:ruleformat}
In this section, we introduce the \emph{symbolic SOS} rule format.
 The format is purely syntactic, to the point that meta-variables are used as placeholders for both programs and states.
 This allows us to derive both concrete semantics~(\Cref{sec:concretesemantics}) and symbolic semantics~(\Cref{sec:symbolicsemantics}) from a single specification in the format. 

Let $\XVar=\{\mx,\my,\dots\}$ and $\SVar=\{\ma,\mb,\dots\}$ be sets of meta-variables that are placeholders for programs and states, respectively.
An (uninterpreted) \emph{transition} is either progressive or terminating.
A \emph{progressive} transition is an expression of the form $(\mx,\ma) \metatr (\my,\mb)$ with $\mx,\my \in \XVar$ and $\ma,\mb\in \SVar$.
Here, $\mx$ is called the \emph{source} and is said to transition to the \emph{target} $\my$ with \emph{input} $\ma$
and \emph{output} $\mb$.
A \emph{terminating} transition lacks a target and only produces output, written $(\mx,\ma) \metatrterm \mb$.
\emph{Terms} $\mt \in \Sigma^*(\XVar)$ may be used as sources and targets.
Using a special termination symbol $\terminated\in\Sigma$, we let $(\mx,\ma)\metatrterm \mb$ be synonymous to $(\mx,\ma) \metatr (\terminated,\mb)$.
We use $\literal$ to denote uninterpreted transitions, progressive or terminating.
An SOS \emph{rule} consists of a set $\{\literal_i\}_{i=1..n}$, called the \emph{premises}, together with a \emph{conclusion}~$\literal$.
\vspace*{-0.4cm}
\begin{definition}[Symbolic SOS Rule] \label{def:symsosrule} A \emph{symbolic SOS rule} for an operator $\operator \in \Sigma$ of arity $n=\arity(\operator)$ is a rule 
 $ \srule{}{\literal_1 \quad \dots \quad \literal_{n} \quad\quad \trigger}{\literal}$
 where $\trigger\in\BB\Var$ is a Boolean expression called the \emph{trigger}  of the rule and 
 \begin{itemize}
 \item the source of the conclusion $\literal$ is $\operator(\mx_1,\dots,\mx_n)$;
 \item the source of the premise $\literal_i$ (each $i$) is $\mx_i$ and its target (if progressive) is $\my_i$;
 \item the input of each premise $\literal_i$ and the conclusion input is $\ma$;
 \item the output of premise $\literal_i$ is $\mb_i$;
 \item if $\literal$ is progressive, its target is in $\Sigma^*(\{\mx_1,\dots,\mx_n\} \cup \{\my_i \mid \literal_i \textup{ is progressive}\})$;
 \item the conclusion output is a map $\Var \to \EE(\{\ma,\mb_1,\dots,\mb_n\}\times\Var\})$.
 \end{itemize}
\end{definition}
This definition is an adaptation of the format of \emph{stateful SOS} \cite{goncharov2022stateful}; we have replaced states by meta-variables, added extra structure in the conclusion output, and added a Boolean trigger for control-flow.
Our specification, defined just below, requires the trigger to be either the guard $\guardof\operator$ of $\operator$, or its negation.
The last item says that the conclusion output can store an expression over $\SVar\times\Var$ for every variable, but it restricts to using meta-variables in $\SVar$ occurring in the rule.
An expression in $\EE(\SVar\times\Var)$ can be interpreted as a value in $\Val$ once the meta-variables have been replaced by concrete states---this is technically outlined in \Cref{sec:concretesemantics}---turning the conclusion output into a new concrete state.
Replacing the meta-variables by \emph{symbolic} states (defined later), the map $\Var\to\EE(\SVar\times\Var)$ can be interpreted as a new symbolic state, as outlined in \Cref{sec:symbolicsemantics}.

A \emph{symbolic SOS specification} requires each operator to have exactly \emph{two} rules whose triggers are complementary.
Letting these triggers coincide with an operator's guard and its complement, one rule has trigger $\guardof\operator$, the other $\neg\guardof\operator$.
The behavior of a program may also depend on whether or not any of its subterms terminates (e.g. sequencing).
There are therefore two rules for every operator $\operator$ \emph{and} every set $W \subseteq \{1,\dots,\arity(\operator)\}$ indicating which premises are progressive.
\begin{definition}[Symbolic SOS Specification] \label{def:symsos} 
 A \emph{symbolic SOS specification} for a signature $\signature$ is a set $\symbexspec$ of symbolic SOS rules with the following condition: for every operator $\operator \in \Sigma\setminus\{\terminated\}$ with $n=\arity(\operator)$ and for every subset $W \subseteq \{1,\dots,n\}$, there are exactly \emph{two} rules $\RR_1,\RR_2 \in \symbexspec$ such that
 \begin{itemize}
 \item the premises $\{\literal_i\}_{i\in W}$ of both $\RR_1$ and $\RR_2$ are progressive, and
 \item the premises $\{\literal_i\}_{i \in [1..n]\setminus W}$ of both $\RR_1$ and $\RR_2$ are terminating.
 \end{itemize}
 Moreover, for these rules, the triggers of $\RR_1$ and $\RR_2$ are $\guardof\operator$ and $\neg\guardof\operator$.
\end{definition}

\begin{figure}[!t]
\centering
$\begin{array}{l@{\qquad}l}
 \saxiom{skip}{(\pskip,\ma)\metatrterm \ma} &
 \saxiom{assign}{(\assign,\ma)\metatrterm \ma[\x \mapsto \ma(\exp)]} \\\\
 \srule{seq-0}{(\mx,\ma)\metatrterm \ma'}{(\seq\mx\my,\ma)\metatr(\my,\ma')} &
 \srule{seq-n}{(\mx,\ma)\metatr(\mx',\ma')}{(\seq\mx\my,\ma)\metatr(\seq{\mx'}\my,\ma')} \\\\
 \srule{if-T}{\bexp}{(\pif\bexp\mx\my,\ma) \metatr (\mx,\ma)} &
 \srule{while-T}{\bexp}{(\while\bexp\mx,\ma) \metatr (\seq\mx{\while\bexp\mx},\ma)} \\\\
 \srule{if-F}{\neg\bexp}{(\pif\bexp\mx\my,\ma) \metatr (\my,\ma)} &
 \srule{while-F}{\neg\bexp}{(\while\bexp\mx,\ma) \metatrterm \ma}
 \end{array}
$
 \caption{A symbolic SOS specification for the \While language,
   from which both symbolic and concrete semantics can be derived.}
 \label{fig:whilespec}
\end{figure}

\begin{example}[Specification for \While]\label{ex:specification}
 A symbolic SOS specification for the language system for \While (from \Cref{ex:whilesignature}) is shown in \Cref{fig:whilespec}.
 We omit the trigger of a rule if it is $\true$.
 The rule for $\neg\true$ does not matter for the semantics (see \Cref{sec:concretesemantics,sec:symbolicsemantics}), because no state ever satisfies~$\neg\true$.
 The rules for branching, iteration, and sequencing are syntactic sugar for sets of rules.
 \refRule{while-F}, for instance, represents two rules: one with premise $(\mx,\ma) \metatr (\mx',\ma')$ and one with premise $(\mx,\ma) \metatrterm \ma'$.
 Premise targets and outputs are not used in these instances.

 When writing $\ma$ (or $\ma'$) in the conclusion output of a rule, we mean the map $\Var \to \EE(\SVar \times\Var)$ that sends $\x$ to $(\ma,\x)$.
 \refRule{assign} uses common notation for function \emph{updates} $\ma[\x\mapsto\ma(\exp)]$, i.e., a function that maps every $\y$ to $(\ma,\y)$ except $\x$, which is mapped to $\ma(\exp)$.
 Here, $\ma(\exp)$ is shorthand for the expression $\exp$ with every variable $\y$ substituted by $(\ma,\y)$.

 In the following two sections, the meta-variables for states will be substituted by \emph{concrete} states (\Cref{sec:concretesemantics}) and by \emph{symbolic} states (\Cref{sec:symbolicsemantics}).
 There, the reasons for our choice of shorthand notations for $\ma$ and $\ma(\exp)$ will be made clear.
\end{example}

\section{Concrete Semantics} \label{sec:concretesemantics}
We now show how concrete semantics is derived from a symbolic SOS specification.
We make this derivation explicit to juxtapose it with the derivation of the symbolic semantics, and to show how the meta-variables can be formally interpreted as actual states, both concrete and symbolic.

Recall that $\Prg=\Sigma^*(\emptyset)$ is the set of all programs.
Usually, the meta-variables of language specification rules range over
all programs.
Formally, the symbolic SOS rules from the previous section are
equipped with a \emph{meta-substitution},\footnote{Not to be confused
  with substitutions of values or expressions.} i.e., a map
$\metasubst_X \colon \XVar \to \Prg$.  This mapping canonically
extends to $\Sigma^*(\XVar) \to \Prg$, performing uniform substitution
on programs over meta-variables.  Usually, a meta-substitution
$\metasubst_X$ is
partially defined, namely on the sources
$\{\mx_1,\dots,\mx_n\}$ and the targets
$\{\my_i\mid\literal_i \textup{ is progressive}\}$ of a rule's
premises $\literal_1,\dots,\literal_n$.  The rule format ensures that
$\metasubst_X$ can also be applied to the conclusion source
$\operator(\mx_1,\dots,\mx_n)$ and the conclusion target $\mt$.  The
meta-variables in the rules are all distinct, but since
meta-substitutions can be injective, programs in the rule may
coincide.

The key insight here is that states, much like programs, can also be interpreted symbolically using meta-variables:
\begin{definition}[Meta-Substitution of Concrete States] \label{def:metasubstconcrete} A \emph{meta-substitution of concrete states} is a map $\metasubst_S \colon \SVar \to \Val^\Var$.
\end{definition}
We will often combine meta-substitutions of programs $\metasubst_X \colon \XVar \to \Prg$ and of states $\metasubst_S \colon \SVar \to \Val^\Var$ into one meta-substitution $\metasubst \colon \XVar + \SVar \to \Prg + \Val^\Var$.

Meta-substitutions of states are usually only partially defined, namely on the input $\ma$---which is the same for all premises and for the conclusion---and on the premise outputs $\{\mb_1,\dots,\mb_n\}$ occurring in a rule.
The rule format guarantees that a meta-substitution of states $\metasubst_S$ can be applied to the rule's conclusion output $\mc \colon \Var \to \EE(\SVar\times\Var)$, which would be of type $\metasubst_S(\mc) \colon \Var \to \EE(\Val^\Var\times\Var)$.
But now that we have access to concrete states, we can use function evaluation ${\eval \colon \Val^\Var\times\Var \to \Val, (\state,\x)\mapsto\state(\x)}$, to interpret these pairs occurring in an expression as values from the concrete state.
Evaluating the expression of values for each variable using $\evaluate \colon \EE\Val\to\Val$ provides us with a new state given by 
$ \metasubst_S(\mc) \colon \Var \to \EE(\Val^\Var\times\Var) \xrightarrow{\EE(\eval)} \EE(\Val) \xrightarrow{\evaluate} \Val $
for the conclusion output.

A \emph{concrete execution model} is a deterministic unlabeled transition system $(\Prg\times\Val^\Var,\contr)$.
The concrete execution model $\contr$ \emph{intended} by a symbolic SOS specification $\symbexspec$ for the signature $\signature$ is recursively defined on the structure of programs in~$\Prg$ as follows:
\begin{definition}[Concrete Semantics] \label{def:concretesemantics} 
  Let $\operator \in \Sigma$ be an operator with ${n=\arity\operator}$, let $\p_1,\dots,\p_n \in \Prg$ be programs for which transitions have been defined, and let ${\state\in\Val^\Var}$ be an arbitrary state.
 Let ${W \subseteq \{1,\dots,n\}}$ indicate which transitions $(\p_i,\state)\contr(\p^{(i)},\state^{(i)})$ are progressive, and let ${\RR_1,\RR_2 \in \symbexspec}$ be the two rules for $\operator$ and $W$ with triggers $\guardof\operator$ and $\neg\guardof\operator$, respectively.
  Then let $(\operator(\p_1,\dots,\p_n),\state) \contr (\metasubst(\mt),\metasubst(\mc))$ by definition, with the meta-substitution 
  $$ \metasubst \colon \XVar+\SVar \to \Prg + \Val^\Var,\,\,\, 
  \mx \mapsto 
  \begin{cases} 
    \p_i & \mx=\mx_i\\
    \p^{(i)} & \mx=\mx'_i
  \end{cases} \quad 
  \mb \mapsto 
  \begin{cases} 
    \state & \mb=\ma\\
    \state^{(i)} & \mb=\mb_i 
  \end{cases}
  $$ 
  using conclusion target and output $(\mt,\mc)$ of $\RR_1$ if $\state\sat\guardof\operator$ and of $\RR_2$ otherwise.
\end{definition}
Constants in $\Sigma$ use axioms in the specification and constitute the base cases in this recursive definition.
The resulting relation is clearly deterministic: every pair $(\p,\state)$ defines exactly one outgoing transition, except when $\p=\terminated$.

\begin{example}\label{ex:concretestep}
 Consider a program in \While that computes absolute values:
 $$\pabs \triangleq \pif{(\x<0)}{\{\x\asgn 0-\x\}}{\{\pskip\}}$$
 We have a specification from \Cref{ex:specification} which induces a concrete execution model $(\contr,\Prg\times\Val^\Var)$.
 Let $\state$ be a concrete state that maps $\x$ to $-42$.
 Then $(\pabs,\state)\contr(\x\asgn 0-\x,\state) \contrterm \state',$
 where $\state' \colon \x \mapsto 42$.
 The number $42$ was obtained by evaluating the expression $0-(\ma,\x)$ after the meta-substitution $\metasubst_S \colon \ma \mapsto \state$.
\end{example}

\section{Symbolic Semantics} \label{sec:symbolicsemantics}
We develop the semantics of symbolic execution, based on the same specification as we used to derive the concrete semantics.
The meta-substitution will now substitute meta-variables by \emph{symbolic} states.
After describing symbolic states and revisiting meta-substitutions, we introduce \emph{syncrete bisimulation} to coinductively formalize correctness and completeness.
Symbolic execution semantics is derived from the same specification as the concrete semantics.
This semantics is both correct and complete with respect to the concrete semantics (\Cref{thm:syncretebisim}).

The domain of a symbolic state $\sstate\colon\Var\to\EE\Var$, like
that of a concrete state, can
be inductively extended from $\Var$ to $\EE\Var$.  This extension
$\lift\sstate\colon\EE\Var\to\EE\Var$ is
defined by
$\app{\lift\sstate}\x=\app\sstate\x$ and
$\app{\lift\sstate}{\operator(\exp_1,\dots,\exp_n)} =
\operator(\app{\lift\sstate}{\exp_1},\dots,\app{\lift\sstate}{\exp_n})$.
Now $\lift\sstate = \mu_\Var\circ\EE\sstate$, where
$\EE\sstate \colon \EE\Var \to \EE^2(\Var)$ performs uniform
substitution of variables by expressions, and
$\mu_\Var \colon \EE^2(\Var) \to \EE\Var$---seemingly the identity
function---glues expressions together.  Contrast this with the
inductive extension $\lift\state$ of a concrete state
$\state\colon\Var\to\Val$ for which
$\lift\state=\lift\evaluate\circ\EE\state$: one evaluates expressions
of expressions as a new expression, the other evaluates expressions of
values as a value.
We sometimes write $\sstate$ for~$\lift\sstate$.

Symbolic states as substitutions can be applied to predicates in $\PP\Var$ by letting $\app{\sstate}{\predicate(\exp_1,\dots,\exp_n)}:=\predicate(\app\sstate{\exp_1},\dots,\app\sstate{\exp_n})$.
Similarly, they can be applied recursively on Boolean expressions with predicates as base cases.
\begin{definition}[Symbolic States as Concrete State Transformers]\label{def:substeval}
  Let $\state \in \Val^\Var$ be a concrete state and $\sstate \in (\EE\Var)^\Var$ a symbolic state.
  Then \emph{$\state$ after $\sstate$} is the new concrete state $\state\after\sstate := \lift\state \circ \sstate \colon \Var \xrightarrow {\sstate} \EE\Var \xrightarrow {\EE \state} \EE\Val \xrightarrow {\evaluate} \Val $.
\end{definition}
\noindent 
The new state $\state\after\sstate$ evaluates an expression~$\exp$ by inductive extension, but we can also first apply $\sstate$ and then evaluate expression $\app\sstate\exp$ in the initial state~$\state$.
These two always coincide: evaluating $\exp$ in $\state\after\sstate$ is equivalent to evaluating $\app\sstate\exp$ in the initial {state~$\state$}:
\begin{restatable}[Substitution Lemma]{lemma}{substlemma} \label{lem:substlemma}
 Let $\state \in \Val^\Var$ and $\sstate \in (\EE\Var)^\Var$. Then 
 \begin{enumerate*}[label=(\roman*)]
   \item for all expressions $\exp\in\EE\Var$, $(\state\after\sstate)(\exp) = \state(\sstate(\exp))$; and
   \item for all Boolean expressions $\bexp\in\BB\Var$, $\state\after\sstate \sat \bexp$ iff $\state \sat \app\sstate\bexp$.
 \end{enumerate*}
\end{restatable}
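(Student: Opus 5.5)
Part (i) goes by structural induction on $\exp\in\EE\Var$, and part (ii) then follows by structural induction on $\bexp\in\BB\Var$, with predicates as the base case and part (i) doing the work there.

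For (i), the base case is a variable $\exp=\x$. By \Cref{def:substeval}, $(\state\after\sstate)(\x)=\lift\state(\app\sstate\x)$. By definition of the extension, $\app{\lift\sstate}{\x}=\app\sstate\x$, so the right-hand side $\state(\sstate(\x))$ is the same value. If constants such as integers are present as nullary operators in $\ExpOp$, they fall under the inductive case with no arguments, and both sides equal $\evaluate_\operator()$. For the inductive case $\exp=\operator(\exp_1,\dots,\exp_n)$ with $\operator\in\ExpOp$, I unfold the inductive extension of the concrete state $\state\after\sstate$:
\[(\state\after\sstate)(\exp)=\evaluate_\operator\bigl((\state\after\sstate)(\exp_1),\dots,(\state\after\sstate)(\exp_n)\bigr).\]
By the induction hypothesis this equals $\evaluate_\operator(\state(\sstate(\exp_1)),\dots,\state(\sstate(\exp_n)))$. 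On the other side, $\app{\lift\sstate}{\exp}=\operator(\app{\lift\sstate}{\exp_1},\dots,\app{\lift\sstate}{\exp_n})$, and applying $\lift\state$ to this term gives exactly the same expression. The only care needed is to keep the overloaded notation apart. On the left, $\state\after\sstate$ is first extended to $\EE\Var$ via $\lift{(\cdot)}$. On the right, $\sstate(\exp)$ means $\app{\lift\sstate}{\exp}$, and the outer $\state$ means $\lift\state$. Abstractly, the statement is the monad-algebra law $\evaluate\circ\EE\evaluate=\evaluate\circ\mu_\Val$ combined with naturality of $\mu$, but the elementwise induction avoids setting up that machinery.

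For (ii), the base case is a predicate $\bexp=\predicate(\exp_1,\dots,\exp_n)$. Here $\state\after\sstate\sat\bexp$ iff $\interpret(\predicate((\state\after\sstate)(\exp_1),\dots))=\truth$. By (i) the arguments equal $\state(\sstate(\exp_i))$. Since $\app\sstate\bexp=\predicate(\app\sstate{\exp_1},\dots,\app\sstate{\exp_n})$ by definition, this is precisely the condition $\state\sat\app\sstate\bexp$. For the inductive case $\bexp=\logop(\bexp_1,\dots,\bexp_m)$ with $\logop\in\LogOp$, satisfaction is computed from the truth table of $\logop$ applied to the satisfaction values of the $\bexp_j$. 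Since $\sstate$ acts homomorphically on $\logop$, the induction hypothesis gives equal truth values componentwise, and hence equal results. Nullary operators such as $\true$ and $\false$ are covered by the same case.

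I do not expect a real obstacle here. The only point requiring attention is being explicit that satisfaction of compound Boolean expressions is defined compositionally via truth tables, since the paper only states this implicitly. I will make that convention explicit at the start of the proof.
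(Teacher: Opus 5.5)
Your proof is correct. Part (ii) is exactly the paper's argument: predicates form the base case via (i), and the $\LogOp$ case goes through truth tables.

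For part (i) you take a more elementary route. After the variable base case the paper never uses an induction hypothesis. Instead it writes $\lift{(\state\after\sstate)}=\lift\evaluate\circ\EE(\lift\evaluate)\circ\EE^2\state\circ\EE\sstate$ and $\lift\state\circ\lift\sstate=\lift\evaluate\circ\EE\state\circ\mu_\Var\circ\EE\sstate$. It then identifies the two by a diagram chase using naturality of $\mu$ and the monad-algebra law $\lift\evaluate\circ\EE(\lift\evaluate)=\lift\evaluate\circ\mu_\Val$. This is precisely the abstract reformulation you mention and set aside. Your elementwise induction is self-contained and is in effect the unfolded form of that law, which the paper justifies only by remarking that it ``is induction''.

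What the point-free version buys is reuse. The identical chain of equalities, with the uncurried meta-substitutions on $\SVar\times\Var$ in place of $\sstate$, is the whole proof of \Cref{lem:metasubstlemma}. On your route you would get the same economy by stating (i) for substitutions from an arbitrary set of variables (such as $\SVar\times\Var$) into $\EE\Var$. Your induction handles that case without change.
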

\begin{example}[Symbolic states as concrete state transformers]\label{ex:statetransformers}
 Consider two variables $\x,\y$, a symbolic state ${\sstate =[\x \mapsto 2 * \x, \y \mapsto \x]}$ and ${\state_0 = [\x \mapsto 21, \y \mapsto 0]}$ a concrete state.
 Then $(\state_0\after\sstate)(\x) = \evaluate(\EE\state(2 * \x)) =
 \evaluate(2 * 21) = 42$ and similarly $(\state_0\after\sstate)(\y) = 21$.
 In general, for this $\sstate$, the map $\state \mapsto \state\after\sstate$ is a concrete state transformer that doubles the value of $\x$ and sets $\y$ equal to the old value of $\x$.
\end{example}

\noindent Symbolic semantics is derived by interpreting meta-variables as symbolic states:
\begin{definition}[Meta-Substitution of Symbolic States] \label{def:metasubstsymbolic} A \emph{meta-substitution of symbolic states} is a map $\symmetasubst_S \colon \SVar \to (\EE\Var)^\Var$.
\end{definition}
The way $\symmetasubst_S$ acts on the conclusion output $\mc \colon \Var \to \EE(\SVar \times \Var)$ of a symbolic SOS rule is analogous to meta-substitution of concrete states:
$$ \symmetasubst_S(\mc) \colon \Var \to \EE((\EE\Var)^\Var\times\Var) \xrightarrow{\EE(\eval)} \EE^2(\Var) \xrightarrow{\mu_\Var} \EE\Var $$
The meta-substitution $\symmetasubst_S$ is first universally applied to $\mc$, then every pair $(\sstate,\x)$ is evaluated within the expressions, and finally, the resulting expression is glued.

\begin{example}[Meta-substitution for assignment] \label{ex:metasubstassign} Let $\state_0=[\x\mapsto 21,\y\mapsto 0]$ and $\sstate=[\x\mapsto2*\x,\y\mapsto\x]$, and let $\state = [\x\mapsto 42,\y\mapsto 21]$.
 In \Cref{ex:statetransformers}, we saw that $\sstate$ transforms $\state_0$ to $\state$, i.e., $\state_0\after\sstate = \state$.
 Suppose $\metasubst_S$ and $\symmetasubst_S$ are concrete and symbolic meta-substitutions such that $\metasubst_S(\ma)=\state$ and $\symmetasubst_S(\ma) = \sstate$.
 Consider an assignment $\x\asgn 0-\x$ as in \Cref{ex:concretestep} and its transition axiom $(\x\asgn 0-\x,\ma) \metatrterm \mc$ in the specification from \Cref{ex:specification}.
 With $\Var=\{\x,\y\}$, we have $\mc \colon \Var \to \EE(\{\ma\}\times\Var)$ such that $\mc \colon \x \mapsto 0-\ma(\x)$ and $\mc \colon \y \mapsto \ma(\y)$.
 Putting $\state' := \metasubst_S(\mc)$ and $\sstate' := \symmetasubst_S(\mc)$, we have $\state'(\x) = 0-\state(\x) = -42$ and $\sstate'(\x) = 0-\sstate(\x) = 0-2*\x$.
 For $\y$, $\state'(\y)=\state(\y) = 21$ and $\sstate'(\y) = \sstate(\y) = \x$.
 Therefore, $\state_0 \after \sstate' = \state'$.
\end{example}

In this example, the concrete and symbolic states are transformed concertedly by the assignment update.
Specifically, $\state_0 \after \symmetasubst_S(\mc) = \metasubst_S(\mc)$ follows from ${\state_0 \after \symmetasubst_S(\ma)} = \metasubst_S(\ma)$ because $\ma$ is the only meta-variable occurring in $\mc$.
Thus, the example illustrates a general inductive property of our rule format: at every step, the change in symbolic state matches the change in concrete state.

\begin{restatable}[Meta-Substitution Lemma]{lemma}{metasubstlemma}\label{lem:metasubstlemma}
 Let $\metasubst_S \colon \SVar \to \Val^\Var$ be a concrete and $\symmetasubst_S \colon \SVar \to (\EE\Var)^\Var$ a symbolic meta-substitution and let $\state_0 \in \Val^\Var$ be a concrete state.
 For all $\mc \colon \Var \to \EE(\SVar\times\Var)$, if $\state_0\after\symmetasubst_S(\mb) = \metasubst_S(\mb)$ for all $\mb \in \SVar$ that occur in $\mc$, then $\state_0\after\symmetasubst_S(\mc) = \metasubst_S(\mc)$.
\end{restatable}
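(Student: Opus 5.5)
We prove the statement pointwise: fix $\x\in\Var$ and write $\exp := \mc(\x) \in \EE(\SVar\times\Var)$. By \Cref{def:substeval}, $(\state_0\after\symmetasubst_S(\mc))(\x) = \lift{\state_0}(\symmetasubst_S(\mc)(\x))$, where $\symmetasubst_S(\mc)(\x) = \mu_\Var(\EE(\eval)(\EE\symmetasubst_S(\exp)))$. On the other side, $\metasubst_S(\mc)(\x) = \evaluate(\EE(\eval)(\EE\metasubst_S(\exp)))$. So it suffices to show, for every expression $\exp\in\EE(\SVar\times\Var)$ whose state meta-variables satisfy the hypothesis,
\[ \lift{\state_0}\bigl(\mu_\Var(\EE(\eval\circ(\symmetasubst_S\times\id))(\exp))\bigr) = \evaluate\bigl(\EE(\eval\circ(\metasubst_S\times\id))(\exp)\bigr). \]
We prove this by structural induction on $\exp$.

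\emph{Base case.} Here $\exp$ is a leaf $(\mb,\y)$ with $\mb$ occurring in $\mc$. The left side becomes $\lift{\state_0}(\symmetasubst_S(\mb)(\y))$, which by \Cref{def:substeval} equals $(\state_0\after\symmetasubst_S(\mb))(\y)$. The right side is $\metasubst_S(\mb)(\y)$. These agree by the hypothesis $\state_0\after\symmetasubst_S(\mb)=\metasubst_S(\mb)$.

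\emph{Inductive step.} Here $\exp=\operator(\exp_1,\dots,\exp_k)$ with $\operator\in\ExpOp$. On the left, $\EE(\cdot)$ and $\mu_\Var$ both act homomorphically, so the glued expression is $\operator(\exp'_1,\dots,\exp'_k)$, where $\exp'_i$ is the glued image of $\exp_i$. Since $\lift{\state_0}$ is defined inductively, it sends this to $\evaluate_\operator(\lift{\state_0}(\exp'_1),\dots)$. On the right, $\evaluate$ likewise gives $\evaluate_\operator$ applied to the evaluations of the subterms. The induction hypothesis then closes the step. Any constants of $\ExpOp$ (such as integers) are the $k=0$ instance of this case and are therefore immediate.

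The only delicate point is bookkeeping. We must check that $\mu_\Var$ really commutes with the outer operator, i.e.\ that it is the flattening of the free term monad, and that $\lift{\state_0}$ is a homomorphism of $\ExpOp$-algebras. Abstractly, this is the statement that $\lift{\state_0}\circ\mu_\Var = \evaluate\circ\EE\lift{\state_0}$, i.e.\ that $\evaluate$ is an Eilenberg–Moore algebra. With this identity the whole argument could be compressed into one diagram chase using naturality of $\EE$. I would present the elementary induction, since it amounts to exactly this identity.
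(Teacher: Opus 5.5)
Your proof is correct. It is the paper's argument carried out elementwise rather than point-free.

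The paper proceeds in three moves:
\begin{enumerate}
\item It first extends the hypothesis to all of $\SVar$ (WLOG).
\item It curries the hypothesis into $\metasubst_S' = \lift\evaluate\circ\EE\state_0\circ\symmetasubst_S'$, applies $\EE$, and post-composes with $\lift\evaluate$.
\item It rewrites using the monad-algebra law $\lift\evaluate\circ\EE\lift\evaluate = \lift\evaluate\circ\mu_\Val$ and naturality of $\mu$, arriving at $\lift\evaluate\circ\EE\state_0\circ\mu_\Var\circ\EE\symmetasubst_S$.
\end{enumerate}

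Your structural induction establishes the same equation one expression at a time. The base case is the hypothesis at a leaf. The inductive step is the fact that $\EE(f)$, $\mu_\Var$, $\lift{\state_0}$ and $\lift\evaluate$ all commute with $\ExpOp$-operators, which is exactly what those two laws package. The paper itself remarks that the algebra law ``is induction''.

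What each version buys:
\begin{itemize}
\item The paper's version is shorter and isolates precisely which algebraic facts the lemma depends on. That is what one would want to re-check when enriching expressions, as in the extensions section.
\item Yours is self-contained. It also makes ``all state meta-variables satisfy the hypothesis'' part of the induction invariant, which subterms inherit. This handles the restriction to meta-variables occurring in $\mc$ directly, instead of through the WLOG. The WLOG tacitly requires redefining $\metasubst_S$ on meta-variables not occurring in $\mc$; that is harmless, since $\metasubst_S(\mc)$ does not see them.
\end{itemize}

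One small correction to your closing remark. The identity $\lift{\state_0}\circ\mu_\Var = \lift\evaluate\circ\EE\lift{\state_0}$ is not literally the statement that $\lift\evaluate$ is an Eilenberg--Moore algebra. It says that $\lift{\state_0}$ is an algebra morphism out of the free algebra $(\EE\Var,\mu_\Var)$. It follows from the Eilenberg--Moore law combined with naturality of $\mu$, which are precisely the two rewriting steps in the paper's chain.
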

\noindent A \emph{symbolic execution model} is a nondeterministic unlabeled transition system $(\Prg\times(\EE\Var)^\Var\times\BB\Var,\symbextr)$.
We now define the symbolic execution model $\symbextr$ \emph{intended} by a symbolic SOS specification $\symbexspec$ for the signature $\signature$.
For this, we let $n=\arity(\operator)$ and, given arbitrary state $\sstate$ and path condition $\pc$, we recursively define the set of outgoing transitions for $(\operator(\p_1,\dots,\p_n),\sstate,\pc)$, where we have already defined the sets $\P_i = \{ (\p',\sstate',\pc') \mid (\p_i,\sstate,\pc) \symbextr (\p',\sstate',\pc') \}$ of outgoing transitions for the subterms $\p_i$.
An $n$-tuple $\xi \in \prod_{i\in[1..n]} \P_i$ contains one possible combination of targets for $\p_i,\dots,\p_n$.
Let $W_\xi \subseteq \{1,\dots,n\}$ be the set of~$i$ with $\p^{(i)}_\xi \neq \terminated$, indicating progressive premises; write ${\xi = (\p^{(i)}_\xi,\sstate^{(i)}_\xi,\pc^{(i)}_\xi)_{i\in[1..n]}}$.
\begin{definition}[Symbolic Semantics] \label{def:symbolicsemantics}
  Let $\operator\in\Sigma$ be an operator, $\sstate \in (\EE\Var)^\Var$ a symbolic state, $\pc\in\BB\Var$ a path condition, and ${\p_1,\dots,\p_n \in \Prg}$ a set of programs.
 For $\xi\in\prod_{i\in[1..n]}\P_i$, let~$\RR_{\xi,1}\in\symbexspec$
 and~$\RR_{\xi,2}\in\symbexspec$ be the rules for $\operator$ and $W_\xi$ with triggers $\bexp_\operator$ and $\neg\bexp_\operator$, respectively.
 Let $(\operator(\p_1,\dots,\p_n),\sstate,\pc) \symbextr (\p',\sstate',\pc')$, by definition, for all $(\p',\sstate',\pc')$ in the set
 $$\big\{\,\,(\symmetasubst_\xi(\mt_{\xi,1}),\, \symmetasubst_\xi(\mc_{\xi,1}), \pc \land \app\sstate\bexp \land \PC_\xi), \,\, (\symmetasubst_\xi(\mt_{\xi,2}),\, \symmetasubst_\xi(\mc_{\xi,2}),\, \pc \land \neg\app\sstate\bexp \land \PC_\xi)\,\,\big\}_{\xi\in\prod_{i} \P_i}$$
 where $\mt_{\xi,j},\mc_{\xi,j}$ ($j=1,2$) are the conclusion targets and outputs of $\RR_{\xi,j}$, 
 $$ \symmetasubst_\xi \colon \XVar+\SVar \to \Prg + (\EE\Var)^\Var,\,\,\, \mx \mapsto 
  \begin{cases} 
    \p_i & \textup{if }\mx=\mx_i\\
    \p_\xi^{(i)} & \textup{if }\mx=\mx'_i
  \end{cases} 
  \quad \mb \mapsto 
  \begin{cases} 
    \sstate & \textup{if }\mb=\ma\\ 
    \sstate_\xi^{(i)} & \textup{if }\mb=\mb_i 
  \end{cases}
  $$ and $\PC_\xi := \bigwedge_{i\in[1..n]}\pc_\xi^{(i)}$ is the conjunction of all path conditions in the premises.
\end{definition}
Here, $\PC_\xi$ includes \emph{all} premise path conditions, including conditions potentially not used in the conclusion.
This condition may appear too strong for some rule instances in symbolic execution techniques.
However, since $\prod_{i\in[1..n]}\P_i$ comprises all combinations of transitions, and since every $\xi$ induces a step for both $\app\sstate\bexp$ and $\neg\app\sstate\bexp$, the resulting set of path conditions covers all of the input path condition~$\pc$.
Many of the resulting steps may have coinciding continuations.
\begin{restatable}[Path Condition One-Step Coverage]{proposition}{coverageprop} \label{prop:pccoverage}
  Let $\symbextr$ be the intended symbolic execution system of a symbolic SOS specification, and let $(\p,\sstate,\pc)$ be a symbolic configuration.
  For $A = \{ \pc' \mid (\p,\sstate,\pc) \symbextr (\p',\sstate',\pc') \}$:
  \begin{itemize}
    \item $\pc_1 \wedge \pc_2 \equiv \false$ for all $\pc_1,\pc_2 \in A$ such that $\pc_1 \neq \pc_2$; and 
    \item $\bigvee A \equiv \pc$, where $\bigvee A$ denotes finite disjunction of all elements in $A$.
  \end{itemize}
\end{restatable}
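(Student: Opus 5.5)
The plan is a structural induction on the program $\p$, following the recursive clauses of \Cref{def:symbolicsemantics}. I will prove both items, together with finiteness of $A$, simultaneously for every $\sstate$ and $\pc$. Here $\equiv$ means that two Boolean expressions are satisfied by exactly the same concrete states.

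One caveat comes first. The configuration $(\terminated,\sstate,\pc)$ has no outgoing transitions, so its $A$ is empty and the second item fails for it. I therefore restrict to programs in which $\terminated$ does not occur. Such programs stay in this class under the recursion, since their subterms $\p_i$ also avoid $\terminated$ and so each set $\P_i$ is nonempty.

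Write $\p=\operator(\p_1,\dots,\p_n)$ and let $A_i=\{\pc^{(i)} \mid (\p',\sstate',\pc^{(i)})\in\P_i\}$.

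\textbf{Finiteness.} By the induction hypothesis each $A_i$ is finite. Each $\P_i$ is also finite: in the base case it has at most two elements, and in general every tuple $\xi$ contributes exactly two elements. Hence the product $\prod_i\P_i$ is finite, and so is $A$. In the base case $n=0$ the product is the singleton containing the empty tuple and $\PC_\xi=\true$.

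\textbf{Disjointness.} Take $\pc_1\neq\pc_2$ in $A$. They have the form $\pc\land s_1\app\sstate\bexp\land\PC_{\xi}$ and $\pc\land s_2\app\sstate\bexp\land\PC_{\xi'}$, where $s_1,s_2$ each stand for either nothing or $\neg$, and $\bexp=\guardof\operator$. Since they differ syntactically, one of two cases holds.
\begin{itemize}
\item The signs differ. Then $\pc_1\land\pc_2$ contains $\app\sstate\bexp\land\neg\app\sstate\bexp\equiv\false$.
\item $\PC_\xi\neq\PC_{\xi'}$. Because $\PC_\xi$ is the conjunction of the components $\pc^{(i)}_\xi$, some index $i$ has $\pc^{(i)}_\xi\neq\pc^{(i)}_{\xi'}$. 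Both of these lie in $A_i$, so by the induction hypothesis for $(\p_i,\sstate,\pc)$ their conjunction is $\equiv\false$, and hence so is $\pc_1\land\pc_2$.
\end{itemize}
It matters that tuples $\xi\neq\xi'$ which differ only in programs or states yield syntactically equal path conditions. These are identified in the set $A$ and need no disjointness argument.

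\textbf{Coverage.} Since $\app\sstate\bexp\lor\neg\app\sstate\bexp\equiv\true$, we get
\[\textstyle\bigvee A\equiv \pc\land\bigvee_{\xi}\PC_\xi = \pc\land\bigvee_{\xi}\bigwedge_i\pc^{(i)}_\xi .\]
Because $\xi$ ranges over all combinations in $\prod_i\P_i$, distributivity of $\land$ over finite $\lor$ rewrites the right-hand side as $\pc\land\bigwedge_i\bigvee A_i$. By the induction hypothesis $\bigvee A_i\equiv\pc$ for each $i$, so this is $\equiv\pc$. For $n=0$ the inner conjunction is $\true$ and the same conclusion holds directly.

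The main obstacle is the bookkeeping in the disjointness step. It has to be argued against syntactic inequality of elements of $A$ rather than distinctness of the tuples $\xi$, and the argument must locate a differing premise component so that the induction hypothesis can be applied. The coverage step is routine once distributivity over the full product is written out.
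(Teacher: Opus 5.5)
Your proposal is correct and takes essentially the paper's route: structural induction on $\p$, the same case split for disjointness (the two path conditions either differ in the sign of the guard or in some premise condition $\pc^{(i)}$), and the same reduction of coverage to $\pc\land\bigvee_\xi\PC_\xi$. Your distributivity step is just the paper's pointwise argument, which picks, for a state satisfying $\pc$, a satisfied transition of each $\p_i$. Your restriction to $\terminated$-free programs and your explicit finiteness argument are sound refinements that the paper leaves implicit. In particular, the paper's base case (``$A$ consists of exactly two elements'') overlooks the constant $\terminated$, for which $A=\emptyset$, so the second item fails whenever $\pc$ is satisfiable.
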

\begin{example}\label{ex:symbolicstep}
  We return to program $\pabs$ from \Cref{ex:concretestep} and the symbolic SOS specification from \Cref{ex:specification}.
  Using the two axioms for an \emph{if} statement, the derived symbolic execution semantics gives the two transitions $(\pabs,\id, \top) \symbextr (\x\asgn-\x, \id, \top\land(\x<0))$ and $(\pabs,\id, \top) \symbextr (\pskip, \id, \top\land\neg(\x<0))$.
  Continuing for one more step we obtain a set of four reachable configurations; the two on the left stem from $\x\asgn-\x$; the other two from $\pskip$:
  $$\left\{
    \begin{aligned}
      (\terminated,& \x\mapsto-\x, \top\land(\x<0)\land\top),\, && (\terminated, \id, \top\land\neg(\x<0)\land\top), \\
      (\terminated,& \x\mapsto-\x, \top\land(\x<0)\land\neg\top),\, && (\terminated, \id, \top\land\neg(\x<0)\land\neg\top)
    \end{aligned}
  \right\}
  $$
  The bottom configurations can be discarded: no state satisfies $\neg\top$.
  We further simplify path conditions by removing $\top$-conjuncts.
  Then $\symbextr$ reduces the \emph{if} statement to two possible transformations: $\x\mapsto-\x$ if $\x<0$ and $\id$ otherwise.
\end{example}

\noindent
\emph{Syncrete bisimulation} is a coinductive formalization of correctness and completeness for symbolic execution.
We prove that our rule format induces a syncrete bisimulation relation between concrete and symbolic execution semantics, namely the identity relation: every program is syncretely bisimilar to itself.
\begin{definition}[Syncrete Bisimulation] \label{def:syncretebisimulation}
  Let $\consystem$ be a concrete execution model and $\symbexsystem$ a symbolic execution model.
  A relation $R \subseteq \Prg\times \Prg$ is a \emph{syncrete bisimulation between $\contr$ and $\symbextr$} if, for all $\sstate\in (\EE\Var)^\Var$, $\pc\in\BB\Var$, and initial states $\state_0 \in \Val^\Var$ s.t.\ $\state_0\sat\pc$, whenever $\p R\q$:
  \begin{itemize}
    \item if $(\p,\state_0\after\sstate) \contr (\p',\state')$ then there is $(\q',\sstate',\pc')$ such that\\
      \rom{1} $(\q,\sstate,\pc) \symbextr (\q',\sstate',\pc')$
      \;\rom{2} $\state'=\state_0\after\sstate'$
      \;\rom{3} $\p'R\q'$ and \; \rom{4} $\state_0\sat\pc'$.
    \item if $(\q,\sstate,\pc) \symbextr (\q',\sstate',\pc')$ and $\state_0\sat\pc'$ then $(\p,\state_0\after\sstate) \contr (\p',\state_0\after\sstate')$ for some $\p'\in \Prg$ such that $\p'R\q'$.
  \end{itemize}
\end{definition}
The first item makes every step in the symbolic system coinductively complete: every concrete step is matched by a symbolic step that refines the path condition in a way that the initial state $\state_0$
remains satisfied.
The second item makes every step in the symbolic system coinductively correct.
Here, it may seem like any symbolic state $\sstate$ can be chosen, but the updated path condition $\pc'$ always contains the Boolean formula $\bexp$ that guards control-flow under substitution by~$\sstate$.
Hence, the condition $\state_0 \sat \pc'$ entails $\state_0 \after \sstate \sat \bexp$.

In the following results, let $\symbexspec$ be a symbolic SOS specification and let $\contr$ be the intended concrete model and $\symbextr$ the intended symbolic model.
\begin{restatable}{theorem}{syncretebisim}\label{thm:syncretebisim}
  The identity relation on the set $\Prg$ of programs is a syncrete bisimulation between $\contr$ and $\symbextr$.
\end{restatable}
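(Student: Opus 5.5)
We prove both clauses of \Cref{def:syncretebisimulation} for $R=\mathrm{id}_{\Prg}$ simultaneously, by structural induction on the program $\p=\operator(\p_1,\dots,\p_n)$. Both transition relations are defined by recursion on program structure (\Cref{def:concretesemantics,def:symbolicsemantics}), so the induction hypothesis gives both clauses for each subterm $\p_i$. We use it with the same $\sstate$, $\pc$ and $\state_0$, since every premise in a symbolic SOS rule has the same input $\ma$ as the conclusion. The base case (constants, $n=0$) is the degenerate instance of the inductive step with $\xi$ the empty tuple and $\PC_\xi=\true$ as the empty conjunction, so it needs no separate treatment. Throughout we fix $\sstate,\pc$ and $\state_0$ with $\state_0\sat\pc$, and we write $\state:=\state_0\after\sstate$.

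\emph{Completeness.} Suppose $(\p,\state)\contr(\p',\state')$. By \Cref{def:concretesemantics} this step comes from the concrete steps $(\p_i,\state)\contr(\p^{(i)},\state^{(i)})$ and from the rule $\RR_j$ for $W$, where $j=1$ if $\state\sat\guardof\operator$ and $j=2$ otherwise. For each $i$, the induction hypothesis (first clause) gives $(\p_i,\sstate,\pc)\symbextr(\p^{(i)},\sstate^{(i)},\pc^{(i)})$ with $\state^{(i)}=\state_0\after\sstate^{(i)}$ and $\state_0\sat\pc^{(i)}$. Because $R$ is the identity, the symbolic targets are exactly the $\p^{(i)}$. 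Let $\xi$ be the tuple of these symbolic steps. Then $W_\xi=W$, so $\RR_{\xi,j}=\RR_j$, and $\symmetasubst_\xi$ agrees with the concrete $\metasubst$ on program meta-variables. Hence the program targets coincide: $\symmetasubst_\xi(\mt)=\metasubst(\mt)=\p'$. For the state meta-variables we have $\state_0\after\symmetasubst_\xi(\ma)=\state_0\after\sstate=\metasubst(\ma)$ and $\state_0\after\symmetasubst_\xi(\mb_i)=\metasubst(\mb_i)$. \Cref{lem:metasubstlemma} then gives $\state_0\after\symmetasubst_\xi(\mc)=\metasubst(\mc)=\state'$. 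The new path condition is $\pc\land\app\sstate{\guardof\operator}\land\PC_\xi$, or the version with $\neg\app\sstate{\guardof\operator}$. It is satisfied by $\state_0$: the conjuncts $\pc$ and $\PC_\xi$ hold by assumption and by the induction hypothesis. The guard conjunct holds by \Cref{lem:substlemma}(ii), since $\state_0\sat\app\sstate{\guardof\operator}$ iff $\state\sat\guardof\operator$, and that is exactly the test that selected $j$.

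\emph{Correctness.} Suppose $(\p,\sstate,\pc)\symbextr(\q',\sstate',\pc')$ with $\state_0\sat\pc'$. By \Cref{def:symbolicsemantics} this step arises from some $\xi$ and some $j\in\{1,2\}$. Since $\pc'$ contains $\PC_\xi$ as a conjunct, $\state_0\sat\pc^{(i)}_\xi$ for every $i$. The induction hypothesis (second clause) then gives concrete steps $(\p_i,\state)\contr(\p^{(i)}_\xi,\state_0\after\sstate^{(i)}_\xi)$, with the same targets because $R$ is the identity. Concrete execution is deterministic, so these are \emph{the} concrete steps of the $\p_i$, and the concrete $W$ equals $W_\xi$. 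Since $\pc'$ also contains the guard conjunct, \Cref{lem:substlemma}(ii) shows that $\state$ satisfies $\guardof\operator$ exactly when $j=1$. Therefore the concrete semantics selects the same rule $\RR_{\xi,j}$. As in the completeness case, agreement on program meta-variables together with \Cref{lem:metasubstlemma} gives $(\p,\state)\contr(\q',\state_0\after\sstate')$.

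The main obstacle is the bookkeeping that shows the concrete and symbolic meta-substitutions are instantiated by the \emph{same} rule with matching arguments. This rests on three facts: the shared input $\ma$ lets the induction hypothesis be applied with the same $\sstate$, $\pc$ and $\state_0$; the identity relation forces the premise targets, and hence $W$, to agree; and the guard conjunct in the path condition, translated through the Substitution Lemma, picks out the same trigger. Once this alignment is established, \Cref{lem:metasubstlemma} gives the required equality of output states directly.
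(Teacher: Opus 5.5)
Your proposal is correct and takes essentially the same approach as the paper. Both argue by structural induction on $\p$, applying the hypothesis to the subterms with the same $\sstate,\pc,\state_0$, using the identity relation to match $W$ with $W_\xi$, picking the same rule via the guard conjunct and \Cref{lem:substlemma}, and concluding with \Cref{lem:metasubstlemma}; your explicit appeal to determinism of $\contr$ in the correctness direction just spells out a step the paper leaves implicit.
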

\noindent By induction on the length of the transition chain, with the definitions of correctness and completeness from \Cref{sec:overview}:
\begin{restatable}[Correctness and Completeness]{corollary}{correctnessandcompleteness}\label{cor:correctness}
  The intended model of symbolic execution $\symbextr$ is correct and complete with respect to $\contr$.
\end{restatable}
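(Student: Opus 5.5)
We prove both clauses of \Cref{def:syncretebisimulation} for $R$ the identity relation simultaneously, by structural induction on the program $\p=\operator(\p_1,\dots,\p_n)$. This mirrors the recursive definitions of $\contr$ (\Cref{def:concretesemantics}) and $\symbextr$ (\Cref{def:symbolicsemantics}). The induction hypothesis is that both clauses hold for each $\p_i$, universally quantified over $\sstate$, $\pc$ and $\state_0$ with $\state_0\sat\pc$. Because $R$ is the identity, clause \rom{3} reduces to showing that the concrete and the symbolic target programs coincide. The case $\p=\terminated$ is vacuous, since neither system has outgoing transitions from $\terminated$. Constants are the base case: there are no premises, $W=\emptyset$, and the argument below specializes directly.

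\emph{Completeness.} Fix $\sstate,\pc,\state_0$ with $\state_0\sat\pc$, and put $\state:=\state_0\after\sstate$. The concrete step from $(\p,\state)$ is determined by the premise steps $(\p_i,\state)\contr(\p^{(i)},\state^{(i)})$. By the induction hypothesis applied to $\p_i$ with the same $\sstate,\pc,\state_0$, each such step is matched by a symbolic step $(\p_i,\sstate,\pc)\symbextr(\p^{(i)},\sstate^{(i)}_\xi,\pc^{(i)}_\xi)$ with $\state^{(i)}=\state_0\after\sstate^{(i)}_\xi$ and $\state_0\sat\pc^{(i)}_\xi$. These choices form a tuple $\xi\in\prod_i\P_i$. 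Since the targets coincide with the concrete ones, $W_\xi=W$, so the same two rules $\RR_1,\RR_2$ are selected.

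Suppose $\state\sat\guardof\operator$; the negated case is symmetric. Then the concrete step uses $\RR_1$, and we take the symbolic step built from $\xi$ and $\RR_{\xi,1}$, with path condition $\pc'=\pc\wedge\app\sstate{\guardof\operator}\wedge\PC_\xi$.
\begin{itemize}
\item $\state_0\sat\pc'$: by \Cref{lem:substlemma}(ii), $\state_0\sat\app\sstate{\guardof\operator}$, and each conjunct of $\PC_\xi$ is satisfied by choice of $\xi$.
\item The target programs agree: $\metasubst$ and $\symmetasubst_\xi$ act identically on program meta-variables, so $\metasubst(\mt)=\symmetasubst_\xi(\mt)$.
\item The states agree: the hypothesis $\state_0\after\symmetasubst_\xi(\mb)=\metasubst(\mb)$ holds for $\mb=\ma$ by definition of $\state$, and for $\mb=\mb_i$ by the induction hypothesis. \Cref{lem:metasubstlemma} then yields $\state_0\after\symmetasubst_\xi(\mc)=\metasubst(\mc)$.
\end{itemize}

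\emph{Correctness.} Suppose $(\p,\sstate,\pc)\symbextr(\q',\sstate',\pc')$ via some $\xi$ and rule $\RR_{\xi,j}$, and $\state_0\sat\pc'$. Since $\pc'$ conjoins $\pc$ and every $\pc^{(i)}_\xi$, we get $\state_0\sat\pc$ and $\state_0\sat\pc^{(i)}_\xi$ for each $i$. By the correctness clause of the induction hypothesis, each $(\p_i,\state_0\after\sstate)\contr(\p^{(i)}_\xi,\state_0\after\sstate^{(i)}_\xi)$. Concrete determinism guarantees these are \emph{the} premise steps used in \Cref{def:concretesemantics}, so the concrete $W$ equals $W_\xi$. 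Moreover $\pc'$ contains $\app\sstate{\guardof\operator}$ (if $j=1$) or its negation (if $j=2$). By \Cref{lem:substlemma}(ii), this means $\state_0\after\sstate$ satisfies the corresponding trigger, so the concrete semantics selects the same rule. Equality of the target programs and of the states then follows exactly as in the completeness part, via \Cref{lem:metasubstlemma}.

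The main obstacle is bookkeeping rather than depth. One must ensure the induction hypothesis is invoked with the \emph{same} $\sstate$ and $\pc$, since premises share the input $\ma$, and that determinism forces the premise tuples to align so that $W_\xi=W$. The other delicate point is the negative-trigger case: there $\neg\app\sstate{\guardof\operator}=\app\sstate{\neg\guardof\operator}$ must be used when invoking \Cref{lem:substlemma}.
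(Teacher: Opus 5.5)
Your one-step argument is sound, and it is essentially the paper's proof of \Cref{thm:syncretebisim}. It uses the same structural induction with the hypothesis quantified over all $(\sstate,\pc,\state_0)$ with $\state_0\sat\pc$, determinism to align $W_\xi$ with $W$, \Cref{lem:substlemma} for the trigger, and \Cref{lem:metasubstlemma} for the output state. You could simply have cited that theorem. The gap is that this is not yet the corollary. Correctness and completeness are defined in \Cref{sec:overview} for \emph{multi-step} paths issuing from the initial configuration $(\p_0,\id,\true)$. The paper's proof of the corollary consists exactly of the step you leave out: an induction on the length of the transition chain, built on top of \Cref{thm:syncretebisim}. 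Relabelling the two clauses of \Cref{def:syncretebisimulation} as ``completeness'' and ``correctness'' does not establish the path-level statements.

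\textbf{Completeness.} Given $(\p_0,\state_0)\contr\cdots\contr(\p_k,\state_k)$, start the symbolic path at $(\p_0,\id,\true)$, using $\state_0\after\id=\state_0$ and $\state_0\sat\true$. Maintain the invariant $\state_j=\state_0\after\sstate_j$ and $\state_0\sat\pc_j$ by applying the first clause with $\sstate:=\sstate_j$ and $\pc:=\pc_j$. The relation being the identity guarantees that the symbolic path visits the same programs $\p_j$.

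\textbf{Correctness.} This direction needs one more observation. You are given only $\state_0\sat\pc_k$ at the \emph{end} of the symbolic path. However, the second clause can be invoked at step $j$ only if $\state_0\sat\pc_j$ (the standing hypothesis of \Cref{def:syncretebisimulation}) and $\state_0\sat\pc_{j+1}$. So you must note that path conditions only grow along a path. By \Cref{def:symbolicsemantics}, every symbolic step has the form $\pc_{j+1}=\pc_j\land\cdots$, so $\pc_k$ entails each $\pc_j$ with $j\le k$. With this, the one-step clauses chain into $(\p_0,\state_0)\contr(\p_1,\state_0\after\sstate_1)\contr\cdots\contr(\p_k,\state_0\after\sstate_k)$. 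The uniqueness of this concrete path, which the definition of correctness asserts, then follows from determinism of $\contr$.

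None of this is deep, but it is the actual content of the corollary and must be stated.
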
\noindent
The induced small-step model provides a correct and complete core for symbolic execution.
Full symbolic execution amounts to providing a search strategy for the execution tree built by $\symbextr$, and the result is guaranteed to correspond to concrete program behavior by \Cref{cor:correctness}.

\section{Extensions}\label{sec:extensions}
We consider two extensions to
\While:
\emph{arrays}
(see De Boer and
Bonsangue~\cite{deboerSymbolicExecutionFormally2021}) and a
\emph{probabilistic} programming constructs
(see
Voogd et al.~\cite{voogd2023symprob}).  We immediately obtain
a concrete execution model $\consystem$ and a symbolic execution
model $\symbextr$ that is both correct and complete with respect to
$\contr$.

\paragraph{Arrays.}
Arrays can be incorporated in \While by imposing some structure on $\Var$, $\ExpOp$ and $\LogOp$.
Let the \emph{variables} be a disjoint union of regular and array variables $\Var+\AVar$ with values in $\Z + (\N\partial\Z)$.
Regular variables $\x\in\Var$ are assigned integers and $\a\in\AVar$ partial integer-valued functions with index domain $\N$.

Let \emph{expressions} include $\Aexp$, $\Aasgn$ and $\len$ for $\a\in\AVar$ and $\exp,\exp'\in\EE\Var$.
The semantics is modeled by $\evaluate$; we let $\evaluate$ map $\Aexp$ to $\evaluate(\a)(\evaluate(\exp))$, and $\Aasgn(\exp'')$ to $ \evaluate(\exp')$ if $\evaluate(\exp)=\evaluate(\exp'')$ or $\evaluate(\a[\exp''])$ otherwise.
We let $\evaluate(\len)$ be the size of the set on which $\evaluate(\a)$ is defined.
Now extend $\Sigma$ from \Cref{ex:grammar} by allowing the left-hand side of an assignment to be an array expression and introduce a new constant $\error$ to denote out-of-bounds access.

Let $\oob \in \PredOp$ be a unary predicate indicating absence of indexing errors.
The semantics of the closed predicate $\oob(\exp)$ is inductively defined by \rom 1 $\oob(n) := \truth$ for all constants $n\in\ExpOp$; \rom 2 $\oob(\Aexp)$ iff $(0\le\exp<\len) \land \oob(\exp)$; \rom 3 $\oob(\Aasgn)$ iff $(0\le\exp<\len) \land \oob(\exp) \land \oob(\exp')$; and \rom 4 $\oob(\operator(\exp_{1},\dots,\exp_{n}))\equiv\oob(\exp_{1})\land\dots\land\oob(\exp_{n})$
for $n$-ary operation symbols $\operator\in\ExpOp$.

Now let us define a symbolic SOS specification for signature $(\Sigma,\ExpOp,\PredOp,\LogOp,\arity,\guardsign)$.
Array assignments require a new pair of rules:
$$\begin{tabular}{C{.5\textwidth}C{.5\textwidth}}
  $\srule{}{\oob(\Aexp)}{(\Aexp\asgn\exp',\state) \metatrterm \state[\a\mapsto\a[\state(\exp) := \state(\exp')]]}$ &
  $\srule{}{\neg\oob(\Aexp)}{(\Aexp\asgn\exp',\state) \metatr (\error, \state)}$
\end{tabular}
$$
The conclusion of the left rule denotes the map $\Var+\AVar \to \EE(\{\ms\}\times(\Var+\AVar))$ that maps each variable $\x$ to $\state(\x)$ (including arrays) except that $\a$ is mapped to the \emph{expression} $\a[\state(\exp) := \state(\exp')]$
with $\exp, \exp'$ updated to replace each variable $\y\in\Var+\AVar$
with $\state(\y)$.
The rule on the right signals an error that may be handled by other mechanisms.

The rules in \Cref{fig:whilespec} with trivial triggers can be safely
replaced with two rules: one with the additional trigger $\oob(\exp)$ for
expressions in the program term and one with $\neg\oob(\exp)$ progressing to an error.
For \Rule{if-} and \Rule{while-} rules some additional machinery is needed to maintain the requirements of \Cref{def:symsos}.
For each Boolean expression $\bexp$ we introduce a new binary operator $\psif{\bexp}{\,\cdot\,}{\,\cdot\,}$ (for ``safe if'').
We then have exactly two rules for each of $\syntax{if}~\bexp$ and $\syntax{sif}~\bexp$.

$$\begin{tabular}{C{.5\textwidth}C{.5\textwidth}}
 \srule{if-safe}{\oob(\bexp)}{(\pif\bexp\mx\my,\ma) \metatr (\psif\bexp\mx\my,\ma)} &
 \srule{if-err}{\neg\oob(\bexp)}{(\pif\bexp\mx\my,\ma) \metatr (\error, \ma)} \\\\
 \srule{if-T}{\bexp}{(\psif\bexp\mx\my,\ma) \metatr (\mx,\ma)} &
 \srule{if-F}{\neg\bexp}{(\psif\bexp\mx\my,\ma) \metatr (\my,\ma)} \\
\end{tabular}
$$

Thus an if-statement first checks if its condition contains a nil error, and only then proceeds safely to one of its branches.
A ``safe while'' is implemented analogously by first checking its conditition, and then proceeding as before.

\paragraph{Randomization.}
For probabilistic sampling during program execution, we consider a set of logical variables $\mathcal Y = \{\y_k\}_{k\in\N}$ that represent samples; states are now maps $\Var+\mathcal Y \to \Val$.
We consider a signature $\signature$ similar to the While language (see \Cref{ex:whilesignature}).
To ensure probabilistic independence of samples, assignments $\assign$ cannot involve variables from $\mathcal Y$; they are still represented by a pair $(\x,\exp)\in\Var\times\EE\Var$, but $\Sigma$ is extended with a sampling statement $\sample$.
Consider the rule for sampling (with guard $\true$):
$$ \saxiom{}{\sample, \ma \metatrterm \{\x \mapsto (\ma,\y_0), \y_0 \mapsto (\ma,\y_1), \y_1 \mapsto (\ma,\y_2), \dots \} }$$
which stores the first available sample $\y_0$ in $\x$ and shifts all other samples one position.
Writing $(\state,\samplestream)$ for the state $\Var + \mathcal Y \to \Val$, the concrete rule is
$$ \saxiom{}{\sample, \state, \samplestream \metatrterm (\state[\x \leftarrow \samplestream_0],\tail(\samplestream))}$$
Taking the head and tail does not work in the symbolic counterpart of this rule.
A solution is to introduce a sampling index $k$ and using the rule 
$$\saxiom{}{\sample, \sstate, k \metatrterm \sstate[\x \mapsto \y_k], k+1}$$
An indexing scheme like this must be proven correct in the presence of all the rules in the programming system.
The symbolic model of this language system produces symbolic states $\sstate :\Var + \mathcal Y \to \EE(\Var + \mathcal Y)$.
Keeping $k$ constant in other rules ensures that the part $\mathcal Y \to \EE(\Var + \mathcal Y)$ left-shifts the stream by $k$, always giving a new variable in~$\mathcal Y$.

For this to be a true randomization of programs, one assumes that the values for $\mathcal Y$, given by the map $\samplestream : \mathcal Y \to \Val$, adhere to some probability law.
This is a modeling issue; we argue that the symbolic system produces symbolic states that accurately represent program behavior in that they produce the same result once this initial state (randomized or not) is evaluated by the symbolic state.
For a detailed account of symbolic execution of probabilistic programs, see \cite{voogd2023symprob}.

\section{Related Work}
De Boer and Bonsangue formalize symbolic execution for the \While language and define the notions of correctness and completeness of symbolic execution~\cite{deboerSymbolicExecutionFormally2021}.
They employ a small-step transition system and inductive proofs of correctness and completeness over its transitive closure.
In contrast, our work offers a coinductive alternative (allowing for non-finite computations) that captures both correctness and completeness in terms of syncrete bisimulation.
We do this by quantifying over conceptual initial states $\state_0$, and making concrete small-steps on $\state_{0}\after \sstate$ rather than big-steps on $\state_0$ itself.
A symbolic reconfiguration from $\sstate$ to $\sstate'$ then corresponds to a concrete reconfiguration of $\state_{0}\after\sstate$ to $\state_{0}\after\sstate'$.

Goncharov et al. developed \emph{stateful SOS} (SSOS) for stateful programs~\cite{goncharov2022stateful}, extending 
 GSOS~\cite{bloom1995bisim} with \emph{state}, focusing crucially on the compositionality of the derived semantics.
Via a reduction to GSOS, SSOS specifications are shown to correspond precisely to natural transformations which induce a denotational behavior that ensures compositionality in \emph{resumption} semantics, a very fine-grained semantics in which very few programs are considered equivalent~\cite{goncharov2022stateful}.
In particular, programs that induce the same state transformation --- like $\seq{\x\asgn 1}{\x\asgn\x+1}$ and $\seq{\x\asgn 1}{\x\asgn\x*2}$ ---
may not be equivalent under resumption semantics.
Goncharov et al.\,consider two coarser semantics --- trace and termination --- which fail to be compositional in general.
They therefore propose further restrictions on the SSOS format to ensure compositionality also in these settings.
The unrestricted SSOS format forms the basis for symbolic SOS in our work, but we refine state transformations to ensure that they can be symbolically simulated.

The $\K$ framework shares our goal of defining language semantics with correct and complete analysis tools~\cite{rosu-2017-marktoberdorf}.
In particular, Lucanu et al. develop a language-independent coinductive description of symbolic execution~\cite{arusoaie2013generic,lucanu2017generic}, based on \emph{Reachability Logic}~\cite{stefanescu2014reachability}.
They use matching logic and reachability logic to define semantics as rewrite rules, whereas we provide syntactic restrictions on the common stateful SOS format and introduce syncrete bisimulation as a useful formalization of the correspondence between symbolic and concrete (small-step) semantics.
As a consequence, our proofs mostly use structural induction over programs whereas their proofs use correspondences between proof trees.

Bodin et al. propose another language-independent framework that provides analysis tools ``for free'' with their pretty-big-step semantics~\cite{bodin2015pbs} and Skeletal Semantics~\cite{bodin2019skeletal}.
They provide a framework of simple building blocks (bones) that assemble into programs (skeletons).
Skeletons are given interpretations, and generic consistency results between interpretations are established.
Finally, they define concrete and abstract interpretations and instantiate the consistency results with language-dependent lemmas.
Their approach differs from ours by focusing on \emph{structural}
building blocks of semantics rather than a rule format.
Additionally they focus on abstract interpretation rather than symbolic execution.

\section{Discussion}
We briefly consider two interesting aspects of the presented work: (1) the conditions on the rule formats to simultaneously construct concrete and symbolic semantics, and (2) extensions to more low-level state representations.
  
Our symbolic SOS format provides a \emph{sufficient} condition for both concrete and symbolic semantics to be constructed simultaneously.
However, identifying \emph{necessary} conditions for rule formats that ensure correctness and completeness would be very challenging, because a lot of design choices have to be made to bridge the gap between the desired properties and the semantics specification.
Our  work on symbolic SOS is based on the following important design choices:
\begin{itemize}
  \item Symbolic SOS builds on GSOS, which ensures that bisimilarity is always a congruence and
    that canonical operational models exist.  GSOS provides a
    sufficient condition for this property. GSOS seems fairly close to
    the limits of well-behaved SOS formats. (For more liberal formats
    such as ntyft/ntyxt which allows both look-ahead and negative premises~\cite{DBLP:journals/tcs/Groote93},
	the interperation is more subtle~\cite{DBLP:journals/jlp/Glabbeek04} and it can even be difficult to decide whether a
    (unique) model exists~\cite{DBLP:journals/jlp/KlinN17}.)
  \item Symbolic SOS builds on stateful SOS, which ensures that the
    properties above hold
    in a stateful setting.  Symbolic
    SOS imposes structure such that states can be interpreted both
    concretely and symbolically.
    \Cref{lem:metasubstlemma} (meta-substitution) proves that every
    step in one system is simulated by a step in the other.  Rules in
    a specification must come in pairs --- generalizable to
    complementary tuples of arbitrary size --- with mutually disjoint
    conditions.
\end{itemize}
  For our techniques to apply to a language,
  its operational semantics must be expressible with
  rules that
  syntactically enforce these
  properties:
  GSOS-like restrictions for compositionality and our added
  requirement on state structure.

  The sets of variables and values, and the signatures of expressions
  and Boolean expressions have intentionally been kept abstract.
  \Cref{sec:extensions}
  shows how symbolic execution correctness and completeness is
  maintained with additional structuring of the signatures.
  We believe
  that other extensions, e.g., for pointers or aliasing, would work
  similarly.
  Heaps could then be implemented by imposing structure on the
  states (both concrete and symbolic) similar to
  the arrays of \Cref{sec:extensions}.  By distinguishing non-heap and
  heap variables and evaluating heap variables in partial maps (like
  the array variables), a pointer map is emulated.  A predicate
  similar to the absence of indexing errors can be used to detect null
  pointer exceptions.  See \cite{deboerSymbolicExecutionFormally2021}
  for a discussion on aliasing. In this context, it would be
  interesting
  to further extend the path conditions with separation logic for
  pointers \cite{berdine2005symbolic}.  This would not affect the
  results presented in this paper, provided the meta-substitution
  lemma (\Cref{lem:metasubstlemma}) is maintained, ensuring
  that
  the symbolic SOS rules define matching transitions for the concrete
  and symbolic semantics.

\section{Conclusion}
We present a language-independent rule format for program semantics that induces both concrete and symbolic models.
The induced models enjoy a bisimilarity relationship that ensures correctness and completeness of symbolic execution.
Our approach thus allows to define operational semantics for a language and immediately obtain a symbolic execution engine that
is correct by construction, providing a formal basis for analysis and verification tools.

Technically, we exploit that symbolic states represent transformations of concrete states to augment stateful SOS with a more structured notion of state, thereby obtaining \emph{symbolic} stateful SOS\@.
From symbolic SOS specifications, we show how to derive execution models in terms of symbolic and concrete transition systems.
We formulate the novel notion of \emph{syncrete bisimulation} and use it to prove that the derived symbolic execution model is correct and complete.
The proof makes crucial use of the notion of bisimulation and a very general substitution lemma that relates symbolic states and concrete state transformations.

Our results work for concrete semantics that can be understood as
\emph{deterministic} state transformers.  An interesting direction of
development would be to generalize this to nondeterministic settings,
such as concurrent programs.  This would require a deeper
investigation of the natural transformations induced by the symbolic
SOS rule format and their categorical semantics.  Goncharov et
al.~\cite{goncharov2022stateful} also highlight this direction of
research for the non-symbolic case.

\paragraph{Acknowledgements}
The authors would like to thank the anonymous reviewers for their insightful questions and feedback.

\bibliographystyle{splncs04}
\bibliography{ref.bib}

\appendix 
\section{Proofs}
In the proofs of \Cref{lem:substlemma,lem:metasubstlemma}, we will use some universal algebra.
The family $\evaluate$ of maps $(\evaluate_\operator \colon \Val^{\arity(\operator)} \to \Val)_{\operator\in\ExpOp}$ is an $\ExpOp$-algebra structure for $\Val$, making $(\Val,\evaluate)$ an $\ExpOp$-algebra.
Recall that $\EE$ generates terms using operators in $\ExpOp$ and note that $(\EE,\mu,\eta)$ is a monad, with multiplication $\mu_X\colon\EE^2(X)\to\EE(X)$ gluing together expressions and $\eta_X \colon X \to \EE(X)$ interpreting variables as expressions.

We have defined $\lift\evaluate\colon\EE(\Val)\to\Val$ as the inductive extension of $\evaluate$ over itself, i.e., as the unique map ($\ExpOp$-algebra homomorphism) such that $\evaluate\circ \ExpOp(\lift\evaluate) = \lift\evaluate\circ\iota$, where $\iota\colon\ExpOp(\EE(\Val))\to\EE(\Val)$ ``injects'' the expression $\operator(\exp_1,\dots,\exp_n)$ into $\EE\Val$.
Note that $\lift\evaluate$ is an algebra for the monad $\EE$, meaning $\lift\evaluate\circ\EE(\lift\evaluate) = \lift\evaluate\circ\mu_\Val$.
Indeed, we can evaluate sub-expressions ($\EE(\lift\evaluate)$) and then evaluate the obtained expression ($\lift\evaluate$), or, equivalently, we can glue it as one big expression ($\mu_\Val$) and then evaluate the whole expression ($\lift\evaluate$).
This fact is true ``by induction''.
Rather, it \emph{is} induction: both ways evaluate expressions from the bottom up.

\subsection{Proof of \Cref{lem:substlemma}}
\substlemma*
\begin{proof}
  In this proof we will always use the liftings $\lift\state=\lift\evaluate\circ\EE\state$ of $\state$ and $\lift\sstate=\mu_\Var\circ\EE\sstate$ of $\sstate$ explicitly, so we prove $\lift{(\state\after\sstate)}(\exp) = \lift\state(\app{\lift\sstate}\exp)$ for all expressions $\exp\in\EE\Var$.

  The first item is proved by induction on the structure of terms in $\EE\Var$.
  It is true by definition for variables: 
  $$\lift{(\state\after\sstate)}(\x) = (\state\after\sstate)(\x)= \lift\state(\app\sstate\x) = \lift\state(\app{\lift\sstate}\x)$$
  Extending to expressions, note that the lifting of $\state\after\sstate$ is $\lift\evaluate \circ \EE(\state\after\sstate)$, i.e.,
  \begin{align}\label{eq:evalsubst1}
    \lift{(\state\after\sstate)} \colon \EE\Var \xrightarrow{\EE\sstate} \EE^2\Var \xrightarrow{\EE^2\state} \EE^2\Val \xrightarrow{\EE(\lift\evaluate)} \EE\Val \xrightarrow{\lift\evaluate} \Val 
  \end{align}
  On the other hand,
  \begin{align}\label{eq:evalsubst2}
    \lift\state\circ\lift\sstate \colon \EE\Var \xrightarrow{\EE\sstate} \EE^2\Var \xrightarrow{\mu_\Var} \EE\Var \xrightarrow {\EE\state} \EE\Val \xrightarrow {\lift\evaluate} \Val 
  \end{align}
  The goal is to show that \eqref{eq:evalsubst1} and \eqref{eq:evalsubst2} coincide.
  The diagrams
  $$ \xymatrix{
 *+{\EE^2\Var} \ar[r]^{\EE^2\state} \ar[d]^{\mu_\Var} & *+{\EE^2\Val} \ar[d]^{\mu_\Val} \\
 *+{\EE\Var} \ar[r]_{\EE\state} & *+{\EE\Val} 
}\qquad \xymatrix{
    *+{\EE^2\Val} \ar[r]^{\EE{\lift\evaluate}} \ar[d]^{\mu_\Val}
 & *+{\EE\Val} \ar[d]^{{\lift\evaluate}} \\
 *+{\EE\Val} \ar[r]_{{\lift\evaluate}}
 & *+{\Val}
} $$
  commute, because multiplication is natural and $\lift\evaluate$ is an $\EE$-algebra structure.
  Thus,
  $$ \xymatrix{
    *+{\EE\Var} \ar[r]^{\EE\sstate}
 & *+{\EE^2\Var} \ar[r]^{\EE^2\state} \ar[d]^{\mu_\Var}
 & *+{\EE^2\Val} \ar[r]^{\EE\lift\evaluate} \ar[d]^{\mu_\Val}
 & *+{\EE\Val} \ar[d]^{\lift\evaluate} \\
 & *+{\EE\Var} \ar[r]_{\EE\state}
 & *+{\EE\Val} \ar[r]_{\lift\evaluate}
 & *+{\Val}
}$$
  is a commuting diagram, meaning that \eqref{eq:evalsubst1} and \eqref{eq:evalsubst2} indeed coincide.

  For the second item, we have for all $n$-ary predicates $\predicate \in \PP\Var$,
  $$\begin{array}{rl}
    \state\after\sstate\sat\predicate(\exp_1,\dots,\exp_n) 
    \iff
    & \interpret(\predicate(\lift{\state\after\sstate}(\exp_1),\dots,\lift{\state\after\sstate}(\exp_n)))=\truth \\
    \iff
    & \interpret(\predicate(\lift\state(\app{\lift\sstate}{\exp_1}),\dots,\lift\state(\app{\lift\sstate}{\exp_n})))=\truth \\
    \iff 
    & \state\sat\predicate(\app{\lift\sstate}{\exp_1},\dots,\app{\lift\sstate}{\exp_1}) \\
    \iff 
    & \state\sat\app\sstate{\predicate(\exp_1,\dots,\exp_n)}
  \end{array}$$
  and this constitutes the base case for Boolean expressions.
  For $\logop\in\LogOp$ an $n$-ary logical operator with truth table $t_\logop \colon 2^n \to 2$ (where $2=\{\truth,\falsehood\}$), note that $\state\after\sstate\sat\logop(\bexp_1,\dots,\bexp_n)$ if and only if $t_\logop(\delta_1,\dots,\delta_n)=\truth$ where each $\delta_i = \truth$ if $\state\after\sstate\sat\bexp_i$ and $\delta_i=\falsehood$ otherwise.
  By IHs, $\state\after\sstate\sat\bexp_i$ iff $\state\sat\app\sstate{\bexp_i}$ and so so $\state\after\sstate\sat\logop(\bexp_1,\dots,\bexp_n)$ iff $\state\sat\app\sstate{\logop(\bexp_1,\dots,\bexp_n)}$.
    By induction, then $\state\after\sstate\sat\bexp$ iff $\state\sat\app\sstate\bexp$ for all $\bexp \in \BB\Var$.
\end{proof}

\subsection{Proof of \Cref{lem:metasubstlemma}}
\metasubstlemma*
\begin{proof}
  Without loss of generalization, $\state \after \symmetasubst_S(\ma) = \metasubst_S(\ma)$ for \emph{all} $\ma \in \SVar$.
  Let $\metasubst_S' \colon \SVar\times\Var \to \Val$ and $\symmetasubst_S'\colon \SVar\times\Var \to \EE\Var$ be the curried versions of $\metasubst_S$ and $\symmetasubst_S$.
  Then 
  $$ \metasubst_S' = \lift\evaluate \circ \EE\state \circ \symmetasubst_S' $$
  Applying $\EE$ to this identity and evaluating using $\lift\evaluate$ it follows that $$ \begin{array}{rlr}
    \lift\evaluate \circ \EE\metasubst_S = & \lift\evaluate \circ \EE\lift\evaluate \circ \EE^2 \state \circ \EE\symmetasubst_S & \\
    = & \lift\evaluate \circ \mu_\Val \circ \EE^2\state \circ \EE\symmetasubst_S & \qquad \qquad \lift\evaluate \textup{ is a monad algebra} \\
    = & \lift\evaluate \circ \EE \state \circ \mu_\Var \circ \EE\symmetasubst_S & \mu \textup{ is natural}
  \end{array}
  $$
  The above identity pointwise yields the following commuting diagram:
  $$ \xymatrix{*++{\EE(W\times\Var)^\Var}\ar[r]^<<<<<<<{(\EE\metasubst_S)^\Var}\ar[d]_{(\EE\symmetasubst_S)^\Var} &*++{(\EE\Val)^\Var} \ar[r]^{\lift\evaluate^\Var} & *++{\Val^\Var}\\ *++{(\EE^2\Var)^\Var}\ar[r]_{(\mu_\Var)^\Var} &*++{(\EE\Var)^\Var} \ar[r]_{(\EE\state)^\Var} & *++{(\EE\Val)^\Var} \ar[u]_{\lift\evaluate^\Var}}$$
  Now let $\mc \in \EE(W\times\Var)^\Var$ be arbitrary.
  The path along the top gives a concrete state, call it $\state' := \metasubst_S(\mc)\in\Val^\Var$.
  Two steps through the path along the bottom gives us a symbolic state $\sstate := \symmetasubst_S(\mc) \in (\EE\Var)^\Var$.
  The diagram tells us that $\lift\evaluate^\Var((\EE\state)^\Var(\sstate)) = \state'$.
  More precisely, for every $\x \in \Var$, we have $\lift\evaluate(\EE\state(\app\sstate\x))=\state'(\x)$.
  Recalling \Cref{def:substeval}, we conclude that $\state'=\state\after\sstate$, which was to be shown.
\end{proof}

\subsection{Proof of \Cref{prop:pccoverage}}
\coverageprop*
\begin{proof}
  We prove the two items separately.
  \begin{itemize}
    \item By induction on the structure of $\p$.
      The base cases are constants in $\Sigma$; the set $A$ consists of exactly two elements, one contained in the conjunct $\app\sstate\bexp$, the other in $\neg\app\sstate\bexp$.
      These are always disjoint.

      For the inductive step, write $\p = \operator(\p_1,\dots,\p_n)$ and let $\bexp$ be the Boolean guard of $\operator$.
      We have $(\p,\sstate,\pc) \symbextr (\q_1,\sstate_1,\pc_1)$ and $(\p,\sstate,\pc) \symbextr (\q_2,\sstate_2,\pc_2)$ (some $\q_1,\sstate_1$ and $\q_2,\sstate_2$).
      These were built from two $n$-tuples $$\xi_1 = (\p^{(i)}_1,\sstate^{(i)}_1,\pc^{(i)}_1)_{i\in[1..n]}
      \quad \textup { and } \quad \xi_2 = (\p^{(i)}_2,\sstate^{(i)}_2,\pc^{(i)}_2)_{i\in[1..n]}$$
      such that $(\p_i,\sstate,\pc) \symbextr (\p^{(i)}_1,\sstate^{(i)}_1,\pc^{(i)}_1)$ and $(\p_i,\sstate,\pc) \symbextr (\p^{(i)}_2,\sstate^{(i)}_2,\pc^{(i)}_2)$, $i=1,\dots,n$.
      The two path conditions $\pc_j$ $(j=1,2)$ in $A$ are one of two forms: 
      $$ \pc_j = \pc \wedge \app\sstate\bexp \wedge \bigwedge_{i=1}^n \pc^{(i)}_j \qquad \qquad \pc_j = \pc \wedge \neg\app\sstate\bexp \wedge \bigwedge_{i=1}^n \pc^{(i)}_j$$
      If they are not of the same form then they are disjoint since $\app\sstate\bexp \land \neg\app\sstate\bexp \equiv \false$.
      Therefore, w.l.o.g., both are of the first form.
      Since $\pc_1 \neq \pc_2$, then, there must be $i$ such that $\pc^{(i)}_1 \neq \pc^{(i)}_2$.
      We have $(\p_i,\sstate,\pc) \symbextr (\p^{(i)}_1,\sstate^{(i)}_1,\pc^{(i)}_1)$ and $(\p_i,\sstate,\pc) \symbextr (\p^{(i)}_2,\sstate^{(i)}_2,\pc^{(i)}_2)$.
      By IH, $\pc^{(i)}_1$ and $\pc^{(i)}_2$ are disjoint, and so also $\pc_1 \wedge \pc_2 \equiv \false$.
    \item By induction on the structure of $\p$.
      Base cases are constants; $A$ has two elements: one with path condition $\pc \land \app\sstate\bexp$; the other $\pc \land \neg\app\sstate\bexp$.
      Clearly, their disjunction is equivalent to $\pc$.

      For the inductive step $\p=\operator(\p_1,\dots,\p_n)$, following \Cref{def:symbolicsemantics} and letting $\P=\prod_{i\in[1..n]}\P_i$, we rewrite $\bigvee A$ as 
      $$ \bigvee_{\xi\in\P}[\pc\land\app\sstate{\guardof\operator}\land\bigwedge_{i\in[1..n]}\pc_\xi^{(i)}] \lor \bigvee_{\xi\in\P}[\pc\land\app\sstate{\neg\guardof\operator}\land\bigwedge_{i\in[1..n]}\pc_\xi^{(i)}]$$
      By distributive properties and since $\app\sstate{\guardof\operator}\lor\app\sstate{\neg\guardof\operator}\equiv\true$, this is equivalent to 
      \begin{align}\label{eq:coveragegoal}
        \pc \land \bigvee_{\xi\in\P} \bigwedge_{i=1}^n \pc_\xi^{(i)} 
      \end{align}
      We show that this in turn is equivalent to $\pc$.
      Clearly, \eqref{eq:coveragegoal} entails $\pc$.
      Conversely, let $\state\sat\pc$.
      Fix some $i\in[1..n]$ and consider the set 
      $ A_i := \{ \pc' \mid (\p_i,\sstate,\pc) \symbextr (\p',\sstate',\pc') \}$.
      By IH, $\pc \equiv \bigvee A_i$, so there is $(\p_i,\sstate,\pc)\symbextr(\p',\sstate',\pc')$ such that $\state\sat\pc'$.
      Put $(\p^{(i)},\sstate^{(i)},\pc^{(i)}):=(\p',\sstate',\pc')$.
      Doing this for every $i$ gives an $n$-tuple $\xi=(\p^{(i)},\sstate^{(i)},\pc^{(i)})_{i\in[1..n]}$ such that $\state \sat \bigwedge_{i=1}^n \pc_\xi^{(i)}$, and therefore 
      $\state\sat\bigvee_{\xi\in\P}\bigwedge_{i=1}^n\pc_\xi^{(i)}$.
  \end{itemize}
\end{proof}

\subsection{Proof of \Cref{thm:syncretebisim}} \label{app:syncretebisim}
\syncretebisim*
\begin{proof}
  For $\p\in\Prg$, $\sstate \in (\EE\Var)^\Var$, $\pc \in \BB\Var$, and $\state_0 \in \Val^\Var$, let $\Prop(\p,\sstate,\pc,\state_0)$ denote the property\footnote{Implication from right to left is really saying that if $(\p,\state_0\after{\sstate})\contr(\p',\state')$ then there is $(\p',\sstate',\pc')$ such that $(\p,\sstate,\pc)\symbextr(\p',\sstate',\pc')$, $\state_0\sat\pc'$, and $\state'=\state_0\after{\sstate'}$.} that 
  \begin{align}\label{eq:bisimproperty}
    \begin{array}{c}
    \exists(\p',\sstate',\pc') \textup{ s.t. } (\p,\sstate,\pc)\symbextr(\p',\sstate',\pc') \textup{ and } \state_0 \sat \pc' \\
      \iff \\
      (\p,\state_0\after{\sstate})\contr(\p',\state_0\after{\sstate'})
    \end{array}
  \end{align}
  Define the subset $\mathcal U \subseteq \Prg$ by $$ \{ \p \mid \forall (\sstate,\pc,\state_0)\in(\EE\Var)^\Var \times \BB\Var\times\Val^\Var \, (\state_0 \sat \pc \implies \Prop(\p,\sstate,\pc,\state_0) \} $$
  We show that $\Prg \subseteq \mathcal U$, by structural induction on elements of $\Prg$.
  Let $\operator \in \Sigma$ be an $n$-ary operator, and let $\p=\operator(\p_1,\dots,\p_n)$ with the induction hypotheses (IHs) that $\p_i \in \mathcal U$ for all $i\in[1..n]$.
  Let $(\sstate,\pc,\state_0) \in (\EE\Var)^\Var \times \BB\Var \times \Val^\Var$ be (arbitrary) such that $\state_0\sat\pc$.
  The proof of the biconditional $\Prop(\p,\sstate,\pc,\state_0)$ is twofold:
  \begin{itemize}
    \item[$\Rightarrow:$] Let $(\p,\sstate,\pc)\symbextr(\p',\sstate',\pc')$ with $\state_0\sat\pc'$. 
      By \Cref{def:symbolicsemantics}, there is $\xi=(\p^{(i)},\sstate^{(i)},\pc^{(i)})_{i\in[1..n]}$ such that $(\p_i,\sstate,\pc)\symbextr(\p^{(i)},\sstate^{(i)},\pc^{(i)})$.
      Let $W_\xi$ and the rules $\RR_{\xi,1}$ and $\RR_{\xi,2}$ be as in \Cref{def:symbolicsemantics}, with $\mt_j,\mc_j$ ($j=1,2$) their respective conclusion targets and outputs.
      Then either 
      $$ \textnormal{\rom 1} \qquad (\p',\sstate',\pc') = (\symmetasubst(\mt_1),\symmetasubst(\mc_1),\pc\land\app\sstate{\guardof\operator}\land\bigwedge_{i\in[1..n]}\pc^{(i)}) $$
      or
      $$ \textnormal{\rom 2} \qquad (\p',\sstate',\pc') = (\symmetasubst(\mt_2),\symmetasubst(\mc_2),\pc\land\app\sstate{\neg\guardof\operator}\land\bigwedge_{i\in[1..n]}\pc^{(i)} $$
      where $\symmetasubst=(\symmetasubst_X,\symmetasubst_S)$ and
      \begin{align}\label{eq:symmetasubst}
        \symmetasubst_X(\mx_i)=\p_i \qquad 
        \symmetasubst_X(\mx'_i)=p^{(i)} \qquad 
        \symmetasubst_S(\ma)=\sstate \qquad 
        \symmetasubst_S(\mb_i)=\sstate^{(i)}
      \end{align}
      In either case, \rom 1 or \rom 2, $\state_0\sat\pc^{(i)}$ for all $i$.
      
      By IHs, $\p_i\in\mathcal U$ for all $i\in[1..n]$ and since $(\p_i,\sstate,\pc)\symbextr(\p^{(i)},\sstate^{(i)},\pc^{(i)})$ and $\state_0 \sat \pc^{(i)}$, we have $(\p_i,\state_0\after{\sstate})\contr(\p^{(i)},\state_0\after\sstate^{(i)})$.
      Let $W = \{i\mid\p^{(i)}\in\Prg\}$ be the set indicating which subterms progress.
      Then $W=W_\xi$ and the rules $\RR_1,\RR_2$ from \Cref{def:concretesemantics} for this $W$ coincide with $\RR_{\xi,1}$ and $\RR_{\xi,2}$, and we obtain the same conclusion target and outputs $\mt_j,\mc_j$, $j=1,2$.
      Then, the meta-substitution $\metasubst=(\metasubst_X,\metasubst_S)$ of concrete states in \Cref{def:concretesemantics} is
      $$\metasubst_X(\mx_i)=\p_i \qquad 
      \metasubst_X(\mx'_i)=p^{(i)} \qquad 
      \metasubst_S(\ma)=\state_0\after\sstate \qquad 
      \metasubst_S(\mb_i)=\state_0\after\sstate^{(i)} 
      $$
      Therefore, $\p=\metasubst_X(\operator(\mx_1,\dots,\mx_n))=\symmetasubst_X(\operator(\mx_1,\dots,\mx_n))$, and $\metasubst_X(\mt_j) = \symmetasubst_X(\mt_j)$ for both $j=1,2$.
      Furthermore, since
      $$ \state_0 \after \symmetasubst_S(\mb) = \metasubst_S(\mb) $$
      for all $\mb \in \{\ma,\mb_1,\dots,\mb_n\}$, we moreover have
      $$\state_0 \after \symmetasubst_S(\mc_j) = \metasubst(\mc_j)$$ 
      for both $j=1,2$, by the Meta-Substitution \Cref{lem:metasubstlemma}.
      \begin{itemize}
        \item[-] Now if $(\p',\sstate',\pc')$ was as in \rom 1 then $\state_0\sat\app{\sstate}{\guardof\operator}$ which means, by the Substitution \Cref{lem:substlemma}, $\state_0\after\sstate\sat\guardof\operator$.
      Therefore, $\RR_1$ is used for the concrete $(\p,\state_0\after\sstate)\contr(\metasubst_X(\mt_1),\metasubst_S(\mc_1))$ and indeed $\metasubst_X(\mt_1)=\symmetasubst_X(\mt_1)=\p'$ and $\metasubst_S(\mc_1) = \state_0\after\symmetasubst_S(\mc_1)= \state_0\after\sstate'$, which was the goal of the conditional
        \item[-] Similar reasoning holds for when $(\p',\sstate',\pc')$ was as in \rom 2.
      \end{itemize}
    \item[$\Leftarrow:$] Let $(\p,\state_0\after\sstate)\contr(\p',\state')$ and $(\p_i,\state_0\after\sstate)\contr(\p^{(i)},\state^{(i)})$ for $i\in[1..n]$ (recall $\p=\operator(\p_1,\dots,\p_n)$).
      Without loss of generality, assume $\state_0\after\sstate\sat\guardof\operator$ so that the rule $\RR_1$ is used for this $\operator$ and $W$ (see \Cref{def:concretesemantics}), and let $\mt$ and $\mc$ be the conclusion target and output of $\RR_1$.
      Following \Cref{def:concretesemantics}, $\p' = \metasubst_X(\mt)$ and $\state'=\metasubst_S(\mc)$ where
      $$\metasubst_X(\mx_i)=\p_i \qquad 
      \metasubst_X(\mx'_i)=p^{(i)} \qquad 
      \metasubst_S(\ma)=\state_0\after\sstate \qquad 
      \metasubst_S(\mb_i)=\state^{(i)} 
      $$
      Since (IHs) every $\p_i\in\mathcal U$, we have, for every $i$, $(\p_i,\sstate,\pc) \symbextr (\p^{(i)},\sstate^{(i)},\pc^{(i)})$ for some $(\sstate^{(i)},\pc^{(i)})_{i\in[1..n]}$ such that $\state^{(i)} = \state_0\after\sstate^{(i)}$ and $\state_0\sat\pc^{(i)}$.
      But this means that $W_\xi=W$ for $\xi=(\p^{(i)},\sstate^{(i)},\pc^{(i)})_{i\in[1..n]}$, yielding the same rule $\RR_1$ and justifying the transition $(\p,\sstate,\pc)\symbextr(\symmetasubst_X(\mt),\symmetasubst_S(\mc),\pc')$ with $\pc'=\pc\land\app\sstate{\guardof\operator}\land\bigwedge_{i\in[1..n]}\pc^{(i)}$, where $\symmetasubst_X$ and $\symmetasubst_S$ are exactly as in \eqref{eq:symmetasubst}.
      \begin{itemize}
        \item $\symmetasubst_X(\mt)=\metasubst_X(\mt)=\p'$;
        \item $\state_0\after\symmetasubst_S(\mc)=\metasubst_S(\mc)=\state'$ by the same reasoning as in the proof of the other direction (using the Meta-Substitution \Cref{lem:metasubstlemma}); and 
        \item $\state_0\sat\pc'$, because \rom 1 $\state_0\sat\pc$ by assumption, \rom 2 $\state_0\after\sstate\sat\guardof\operator$ (assumed earlier w.l.o.g.) and so $\state_0\sat\app\sstate{\guardof\operator}$ by \Cref{lem:substlemma}, and \rom 3 $\state_0\sat\pc^{(i)}$ for all $i$.
      \end{itemize}
      Therefore we found a transition $(\p,\sstate,\pc) \symbextr (\p',\sstate',\pc')$ such that $\state'=\state_0\after\sstate'$ and $\state_0\sat\pc'$, and we are done.
  \end{itemize}
  Thus, since $\sstate$, $\pc$, and $\state_0$ were arbitrary, we conclude $\p\in\mathcal U$ and induction is finished.
  Hence, $\Prg \subseteq \mathcal U$.
\end{proof}
\end{document}